\documentclass[10pt]{amsart}

\usepackage{amsmath,amssymb,amsthm,amscd,amsfonts}
\usepackage{fullpage}
\usepackage{graphicx}
\usepackage{stmaryrd}
\usepackage{float}
\usepackage{hyperref}

\usepackage{algorithm}							
\usepackage[noend]{algpseudocode}					
\usepackage{caption}
\usepackage[margin=1in]{geometry}
\usepackage{url}

\usepackage{mathtools}
\usepackage{array}
\usepackage{multirow}

\DeclarePairedDelimiter{\ceil}{\lceil}{\rceil}
\DeclarePairedDelimiter\floor{\lfloor}{\rfloor}
\newcommand\sbullet[1][.5]{\mathbin{\vcenter{\hbox{\scalebox{#1}{$\bullet$}}}}}

\DeclareMathOperator{\PAL}{PAL}
\DeclareMathOperator{\SPAL}{SPAL}
\usepackage{amssymb,latexsym}
\usepackage{amsmath,amsfonts,amsthm,amscd,amsxtra,enumerate}
\usepackage{mathtools}
\usepackage{caption}
\usepackage[all]{xy}
\usepackage{amsmath,latexsym,amssymb, enumerate, amsthm, epsfig, hyperref} 
\usepackage[margin=1in]{geometry}
\usepackage{epsfig}
\usepackage{pgf,tikz}

\usepackage[utf8]{inputenc}


\newtheorem{theorem}{Theorem}[section]
\newtheorem{lemma}[theorem]{Lemma}
\newtheorem{proposition}[theorem]{Proposition}
\newtheorem{remark}[theorem]{Remark}
\newtheorem{observation}[theorem]{Observation}
\newtheorem{definition}[theorem]{Definition}
\newtheorem{corollary}[theorem]{Corollary}
\newtheorem{example}[theorem]{Example}
\newtheorem{conjecture}[theorem]{Conjecture}

\providecommand\max{\text{\rm max}}

\begin{document}

\title{Counting scattered palindromes in a finite word}

\author{Kalpana Mahalingam, Palak Pandoh  }

\address
	{Department of Mathematics,\\ 
	 Indian Institute of Technology Madras, 
	  Chennai, 600036, India}
	  \email{kmahalingam@iitm.ac.in,palakpandohiitmadras@gmail.com }
	

%
%

\keywords{Combinatorics on words, palindromes, scattered subwords}
\begin{abstract} 
We investigate the scattered palindromic subwords in a finite word. We start by characterizing the words with the least number of scattered palindromic subwords.  Then, we give an upper bound for the total number of palindromic subwords in a word of length $n$ in terms of  Fibonacci number $F_n$ by proving that at most $F_n$ new scattered palindromic subwords can be created on the concatenation of a letter to a word of length $n-1$.
We propose a  conjecture on the maximum number of scattered palindromic subwords in a word of  length $n$ with $q$ distinct letters.  We support the conjecture by showing its validity for words where $q\geq \frac{n}{2}$. 

\end{abstract}
\maketitle
\section{Introduction}
Various combinatorial complexities are attached to the study of (sub)sequences in words. Identification of palindromes, i.e., subwords that are symmetric under reversal, in words is one of them. There has been an  interest in studying the properties of palindromes  from  the past two decades in   various fields like biology, modeling quasi-crystals, string matching, Diophantine approximation,  which highlights its  importance (see \cite{id1,id2,id3}). The authors in  
\cite{DROUBAY2001539} proved that the number of palindromic factors in a finite word is bounded above by the length of the word. The words which achieve the bound were referred to as rich words. Several  properties of rich words were studied in \cite{pal3, palrich, GUO2016735}.
  A lower bound for the number of palindromic factors in an infinite word was studied in \cite{MR3037792}. The concept of two-dimensional palindromes was introduced by  Berth\'e et al. \cite{Vullion2001}. The maximum and the least number of $2$D palindromic sub-arrays in a given array was studied in  \cite{mc21,max} and \cite{tcs}, respectively.

 Palindromic subsequences, i.e., scattered palindromic subwords, in words were studied recently in \cite{Palindromic, HOLUB2009953}. The authors in \cite{HOLUB2009953} studied some properties of the scattered palindromic  subwords of binary words. They proved that the set of scattered palindromic subwords of a word  characterizes up to reversal, i.e., if the set of all the scattered palindromic subwords of  words $w$ and $z$ is the same, then either $w=z$ or $w=z^R$.  In \cite{Palindromic}, authors provided  bounds on the minimum length of the longest scattered palindromic subword in some restricted classes of words of a given length.

 Motivated by the results in \cite{Palindromic, HOLUB2009953}, we investigate the scattered palindromic subwords in a word. We study both  lower and  upper bound on the total number of scattered palindromic subwords in a word and also compute  bounds on the number of scattered palindromic subwords of a given length. We first observe that the maximum number of scattered palindromic subwords in a word depends on both the length of the word and the number of distinct letters in it. We propose a tight bound (Conjecture \ref{1}) on the maximum number of  scattered palindromic subwords in  a word  of length $n$ with $q$ distinct letters. We support the conjecture by proving it for words  with the number of distinct letters at least half of its length. The conjecture has been verified  for words up to length $20$ using a computer program for all values of $q$.

The organization of the paper is as follows. In Section \ref{sec-block}, we show that in a word $w$,  there are at least $|w|$ non-empty scattered palindromic subwords where $|w|$ depicts the length of the word $w$. It is followed by the characterization of such words. 
In Section \ref{sec-t}, we give an upper bound for the total number of scattered palindromic subwords in a word of length $n$ in terms of the Fibonacci number which states that any word of length $n$ has at most $F_{n+2}-1$ scattered palindromic subwords. We also conclude that  the maximum number of scattered palindromic subwords  depends both on the length and the number of distinct letters in the word. At last, in Section \ref{sec-tight}, we prove that the maximum number of  scattered palindromic subwords in a word of length $n$ with  $q\geq \frac{n}{2}$ distinct letters is $2^{n-q}(2q-n+2)-2$. Based on some observations and numerical results, we  conjecture a tight bound over all values of $q$. We end the paper with some concluding remarks.

\section{Basic definitions and notations}\label{sec2}

Let $\Sigma$ be an alphabet. A  finite word $w = [w_i]_ {1 \leq i \leq n}$ over the alphabet $\Sigma$ is defined to be a finite sequence of letters $w=w_1 w_2 w_3 \cdots w_n$, where $w_i \in\Sigma$ is the letter at the $i$-$th$ position in $w$. The set of  words over $\Sigma$ is denoted by $\Sigma^{*}$ and by $\lambda$, the empty word. For a word $w$, we denote by $|w|$, the length of the word $w$. We use the notation $\Sigma^{+}= \Sigma^{*}\setminus\lambda$. We denote by $|w|_x$, the number of occurrences of $x$ in $w$ where $x \in \Sigma$.  A word $u$ is a scattered subword of a word $w$, denoted by $u\prec w$, if there exist words $\alpha_1,\ldots,\alpha_n$ and
$\beta_0,\ldots,\beta_n$, some of them possibly empty, such that
$u = \alpha_1\cdots \alpha_n$ and $w = \beta_0\alpha_1\beta_1\cdots \alpha_n\beta_n$. Also, for words $u$ and $w$, $u\nprec w$ denotes that  $u$ is not a scattered subword of $w$. The word $u$ is a factor of $w$ if there are words $\alpha$ and $\beta$ such that $w = \alpha u\beta$. If the word $\alpha$
(resp. $\beta$) is empty, then $u$ is also called a prefix (resp. suffix) of $w$. $Alph(w)$ denotes the set of all the subwords of $w$ of length $1$. 
The reversal of $w=w_{1}w_{2} \cdots w_{n}$ is defined to be the word $w^{R}= w_{n} \cdots w_{2} w_{1}$. 
 
A word $w$ is said to be a palindrome  if $w=w^{R}$. If a factor
$u$ of $w$ is a palindrome, then we say that $u$ is a  palindromic factor of $w$ and if a scattered subword
$u$ of $w$ is a palindrome, then we say that $u$ is a scattered palindromic subword of $w$. We denote by $\PAL(w)$ and $\SPAL(w)$, the set of all  non-empty palindromic factors, and scattered palindromic subwords of $w$, respectively.  Also, $P(w)=|\PAL(w)|$ and $SP(w)=|\SPAL(w)|$. We illustrate with the help of examples.
\begin{example}\label{e12}
 For the word $w=aabb$,
 $$\PAL(w)=\SPAL(w)=\{a,b,aa,bb\}\implies P(w)=SP(w)=5.$$

\end{example}

\begin{example}\label{e11}
 For the word $w=abbaa$,
 $$\PAL(w)=\{a,b,aa,bb,abba\}\implies P(w)=5.$$
 $$\SPAL(w)=\{a,b,aa,bb,aba,aaa,abba\} \implies SP(w)=7.$$
 Here, $P(w)\neq SP(w)$.
\end{example}

We denote the number of scattered palindromic subwords of length $t$ in the word $w$ by $SP_t(w)$. Thus, $SP(w)=\sum_{t=1}^{|w|}SP_t(w)$.

For all other concepts in  combinatorics on words, the reader is referred to \cite{Lothaire97}.
Throughout the paper, by the number of palindromic factors (resp. scattered palindromic subwords), we mean  the number of non-empty distinct
palindromic factors (resp. scattered palindromic subwords) in a word.


\section{Relation between palindromic factors and scattered palindromic subwords}\label{sec-block}

It is well known (\cite{DROUBAY2001539}) that the total number of palindromic factors in a word can at most be its length. But this is not  true in the case of scattered palindromic subwords (see Example \ref{e11}). In this section, we show that the total number of scattered palindromic subwords in a word is at least its length. We also characterize words which have the total number of scattered palindromic subwords exactly equal to their length.

Given a finite word $w$, it is clear that the set of all  palindromic factors  of $w$ forms a subset of the set of all scattered palindromic subwords of $w$. Hence, $P(w)\leq SP(w)$. We recall the following result from \cite{DROUBAY2001539}.

\begin{proposition}\label{t2}\cite{DROUBAY2001539}
Let $w$ be a finite word. Then, $P(w)\leq|w|$.
\end{proposition}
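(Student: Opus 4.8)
The plan is to proceed by induction on $|w|$. The cases $|w| \le 1$ are immediate: $P(\lambda)=0$ and $P(x)=1$ for a single letter $x$. For the inductive step I would write $w = w'a$ with $a \in \Sigma$, assume $P(w') \le |w'| = |w|-1$, and reduce the claim to showing that appending the letter $a$ creates at most one new palindromic factor, i.e. $|\PAL(w)\setminus\PAL(w')| \le 1$. Granting that, $P(w) \le P(w') + 1 \le |w|$.

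The first step is the easy observation that any factor of $w$ which is not already a factor of $w'$ must occupy the last position of $w$, hence is a suffix of $w$; so every new palindromic factor is a palindromic suffix of $w$. Next I would single out $u$, the longest palindromic suffix of $w$, and argue that every palindromic suffix $v \ne u$ of $w$ is in fact already a factor of $w'$. This is where the real content lies: such a $v$ is a suffix of $w$ strictly shorter than the longest palindromic suffix, hence $v$ is a suffix of $u$; and because $u = u^R$, the word $v = v^R$ is then a \emph{prefix} of $u$. Therefore $v$ has an occurrence inside $w$ ending at position $|w| - |u| + |v| < |w|$, so $v \prec w'$. Combining the two steps gives $\PAL(w) \subseteq \PAL(w') \cup \{u\}$, which is exactly the bound $|\PAL(w)\setminus\PAL(w')| \le 1$ needed to close the induction.

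The main obstacle — essentially the only nontrivial point — is the ``prefix $=$ suffix'' property of palindromes that forces every short palindromic suffix of $w$ to reappear earlier in $w$; everything else is routine bookkeeping about which factors can be newly produced by a one-letter extension. One small care point: the argument must not tacitly assume that $u$ itself is new, since if $u \in \PAL(w')$ then no new palindromic factor arises at all and the bound holds a fortiori.
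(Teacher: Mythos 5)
Your proposal is correct and follows exactly the route the paper indicates: the paper cites this result from Droubay--Justin--Pirillo and notes (Observation \ref{oo1}) that it follows from the fact that at most one new palindromic factor is created when a letter is appended, which is precisely what you prove via the longest-palindromic-suffix argument. The key step (a shorter palindromic suffix $v$ of $w$ is a suffix, hence a prefix, of the longest palindromic suffix $u$, and therefore already occurs in $w'$) is the standard and correct justification, and your care point about $u$ possibly not being new is well taken.
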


The proof of Proposition \ref{t2} was deduced by proving the  following fact.

\begin{observation}\label{oo1} \cite{DROUBAY2001539}
At most one new palindromic factor  can be created on the concatenation of a letter to a word.
\end{observation}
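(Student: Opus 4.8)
The plan is to prove the contrapositive-flavored statement directly: appending a letter $a$ on the right of a word $w$ (the left case being symmetric under reversal), I would show that every element of $\PAL(wa)\setminus\PAL(w)$ is a palindromic suffix of $wa$, and then that among all palindromic suffixes of $wa$ at most the longest one can be new.

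\emph{Step 1: new palindromic factors are suffixes.} Suppose $u\in\PAL(wa)$ is new, i.e.\ $u\notin\PAL(w)$. If some occurrence of $u$ as a factor of $wa$ did not involve the final letter, that same occurrence would exhibit $u$ as a factor of $w$, contradicting newness. Hence every occurrence of $u$ in $wa$ ends at position $|wa|$, so $u$ is a suffix of $wa$.

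\emph{Step 2: only the longest palindromic suffix can be new.} Let $p$ be the longest palindromic suffix of $wa$, and let $u$ be any palindromic suffix of $wa$ with $|u|<|p|$. Since $u$ is a suffix of the palindrome $p$, we have $u=u^{R}$ equal to a prefix of $p^{R}=p$; thus the prefix occurrence of $u$ inside $p$ gives an occurrence of $u$ in $wa$ starting at position $|wa|-|p|+1\ge 1$ and ending at position $|wa|-|p|+|u|\le |wa|-1=|w|$. This occurrence lies entirely within $w$, so $u\in\PAL(w)$ and $u$ is not new. Combining the two steps, the only palindromic factor of $wa$ that can fail to occur in $w$ is $p$ itself, which is exactly the assertion of the observation (and also recovers Proposition~\ref{t2} by induction on $|w|$, starting from the empty word).

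The argument has essentially no hard part; the one point requiring a little care is the index bookkeeping in Step~2, namely checking that the strict inequality $|u|<|p|$ is precisely what forces the prefix occurrence of $u$ to terminate no later than position $|w|$, so that it does not touch the appended letter. Degenerate cases (for instance $a\notin Alph(w)$, or $wa$ itself being a palindrome) are automatically covered: either $a=p$ is the unique candidate, or $a$ appears as a prefix of a strictly longer palindromic suffix and is therefore not new.
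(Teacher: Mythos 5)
Your proof is correct. The paper itself offers no proof of this observation --- it is quoted from \cite{DROUBAY2001539} --- and your two-step argument (every new palindromic factor of $wa$ must be a suffix of $wa$, and any palindromic suffix shorter than the longest one reappears as a prefix of that longest palindromic suffix, hence already occurs inside $w$) is exactly the classical argument from that reference, with the index bookkeeping handled correctly.
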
 
 We give a relation between the number of scattered palindromic subwords  and the length of the word. 

\begin{theorem}\label{t3}
For a given finite word $w$, $P(w)\leq|w| \leq SP(w)$.
\end{theorem}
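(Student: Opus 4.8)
The plan is to prove the two inequalities separately. The left inequality $P(w) \le |w|$ is exactly Proposition \ref{t2}, already available, so nothing remains there. The substance is the right inequality $|w| \le SP(w)$, i.e., every word of length $n$ has at least $n$ non-empty scattered palindromic subwords, and the natural route is an induction on $|w|$ mirroring the structure of Observation \ref{oo1}, but now proving that \emph{at least one new} scattered palindromic subword appears when a letter is appended, rather than at most one.

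First I would set up the induction: the base case $|w| = 1$ is trivial since a single letter is itself a scattered palindromic subword, so $SP(w) = 1 = |w|$. For the inductive step, write $w = ua$ with $a \in \Sigma$ and $|u| = n-1$, and assume $SP(u) \ge n-1$. It suffices to exhibit one scattered palindromic subword of $w$ that is not a scattered palindromic subword of $u$. The key observation is to look at the last occurrence of the letter $a$ in $u$ (if any). If $a$ does not occur in $u$ at all, then $a \in \SPAL(w) \setminus \SPAL(u)$ and we are done immediately. Otherwise, consider the longest scattered palindromic subword $p$ of $w$ of the form $a v a$ where $v \prec u'$ and $u'$ is the portion of $u$ strictly between... — more cleanly: among all scattered palindromes of $w$ that both begin and end with $a$, pick one of maximal length; I claim a suitable such palindrome is genuinely new.

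The cleaner argument I would actually push through: let $k = |w|_a$ be the number of occurrences of $a$ in $w = ua$ (so $k \ge 1$, counting the final letter), and let $a^j$ denote the word consisting of $j$ copies of $a$. Then $a, aa, aaa, \dots, a^{k}$ are all scattered palindromic subwords of $w$, but $a^{k}$ is \emph{not} a scattered subword of $u$, since $u$ contains only $k-1$ copies of $a$. Hence $a^{k} \in \SPAL(w) \setminus \SPAL(u)$, giving $SP(w) \ge SP(u) + 1 \ge (n-1) + 1 = n = |w|$, which closes the induction. This is short enough that I would present it directly rather than via a separately stated lemma, though isolating "appending a letter creates at least one new scattered palindromic subword" as an observation parallel to Observation \ref{oo1} makes the exposition symmetric.

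The only thing to be careful about — and the one place an overly clever argument could go wrong — is the edge behavior when $w$ consists of a single repeated letter or when $a$ is brand new; the $a^{k}$ trick handles both uniformly, since if $a$ is new then $k = 1$ and $a^1 = a$ is the new subword, and if $w = a^n$ then $k = n$ and one checks $SP(a^n) = n$ exactly, consistent with the bound being tight. So there is no real obstacle here; the main point is simply to notice that counting powers of the appended letter already suffices, avoiding any delicate analysis of which palindromes straddle the new position. I would also remark in passing that, combined with Example \ref{e11}, this shows the inequality $P(w) \le SP(w)$ can be strict, so Theorem \ref{t3} genuinely separates the two quantities.
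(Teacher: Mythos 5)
Your proof is correct and uses the same key idea as the paper: appending a letter $a$ to a prefix creates the new scattered palindromic subword $a^{m}$, where $m$ is the number of occurrences of $a$ in the extended prefix, so each letter contributes at least one new scattered palindrome. The paper phrases this as a scan over prefixes rather than a formal induction, but the argument is essentially identical.
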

\begin{proof}
It is clear from Proposition \ref{t2} that $P(w)\leq|w|$. Let $w=w_1w_2\cdots w_n$ be a word where $w_i\in \Sigma$. Let the prefix of length $l$ of $w$ be $R_l=w_1w_2\cdots w_{l}$. It can be observed that on the concatenation of each $w_i$ to $R_{i-1}$,  an extra scattered palindromic subword  $w_i^m$, where $m=|R_i|_{w_i}$,  is formed. Hence, at least one scattered palindromic subword is always created on the concatenation of each letter of $w$. Thus, $SP(w)\geq |w|$, which proves the result.
\end{proof}

 Note that the set of all palindromic factors is a subset of the set of all scattered palindromic subwords of a given word, we can deduce the following.

\begin{lemma}\label{lemeq}
For any $w\in \Sigma^+$, $SP(w)=P(w)$ if and only if $\SPAL(w)=\PAL(w)$.
\end{lemma}

We characterize words with the same set  of palindromic factors and scattered palindromic subwords. We first define the following.

\begin{definition}
A word $w$ is called a block word if $w=u_1u_2 \cdots u_r$, where  $u_i=a_i^{n_i}$ such that $a_i\in \Sigma$ are distinct for $1\leq i\leq r$. 
\end{definition}

Note that $aaabcc=a^3bc^2$ is a block word, whereas $aaabaa = a^3ba^2$ is not. 
We show that only block words have the total number of scattered palindromic subwords equal to their length. We have the following.

\begin{theorem}\label{t12}

For a finite word $w$, $SP(w)=|w|$ if and only if $w$ is a block word.
\end{theorem}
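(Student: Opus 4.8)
The plan is to prove both directions of the equivalence, with the easy direction being that block words satisfy $SP(w)=|w|$, and the substantive direction being that $SP(w)=|w|$ forces $w$ to be a block word.

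For the ``if'' direction, suppose $w = a_1^{n_1}a_2^{n_2}\cdots a_r^{n_r}$ with the $a_i$ pairwise distinct and $\sum n_i = |w| = n$. I would argue that $\SPAL(w) = \PAL(w)$ and then apply Lemma \ref{lemeq}, or equivalently just count $\SPAL(w)$ directly. The key observation is that any scattered palindromic subword $u \prec w$ that actually uses two distinct letters cannot be a palindrome: if $u$ is a palindrome of length $\geq 2$ using distinct letters, its first and last letters agree, say both equal $a_i$; but between two occurrences of $a_i$ in $w$ there are only more copies of $a_i$ (since the $a_i$-block is contiguous), so any subword of $w$ beginning and ending with $a_i$ is a power of $a_i$. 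Hence every element of $\SPAL(w)$ is of the form $a_i^m$ with $1 \leq m \leq n_i$, giving $SP(w) = \sum_{i=1}^r n_i = n$. I expect this direction to be short.

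For the ``only if'' direction, I would prove the contrapositive: if $w$ is not a block word, then $SP(w) > |w|$. The natural approach is induction on $|w|$, refining the counting argument in the proof of Theorem \ref{t3}. Write $w = w_1\cdots w_n$ with prefixes $R_l = w_1\cdots w_l$. In that proof, each letter $w_i$ appended to $R_{i-1}$ contributes at least the new subword $w_i^{m}$ with $m = |R_i|_{w_i}$. It suffices to exhibit \emph{one} position $i$ at which a \emph{second} new scattered palindromic subword appears. Since $w$ is not a block word, there is a smallest index $j$ such that the letter $w_j$ has occurred before in $w_1\cdots w_{j-1}$ but $w_{j-1}\neq w_j$ (i.e., $R_{j-1}$ is a block word ending in a letter different from $w_j$, and $w_j$ equals some earlier block letter $a$). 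At this step, besides $w_j^{m} = a^{m}$, the word $a\, w_{j-1}\, a$ is a new scattered palindromic subword of $R_j$: it is a subword because $a$ occurs before $w_{j-1}$ in $R_{j-1}$ and then again at position $j$, it is a palindrome, and it is \emph{not} a subword of $R_{j-1}$ because $R_{j-1}$ is a block word so (by the argument in the first paragraph) its only palindromic subwords starting and ending with $a$ are powers of $a$, while $a w_{j-1} a$ uses the letter $w_{j-1}\neq a$. Thus at step $j$ at least two new scattered palindromic subwords are created while every other step creates at least one, yielding $SP(w) \geq n+1 > |w|$.

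The main obstacle is making the ``second new subword'' argument airtight: one must be careful that the word $a w_{j-1} a$ produced at step $j$ is genuinely new relative to $\SPAL(R_{j-1})$ and is not accidentally double-counted against the guaranteed subword $a^{m}$ (it is not, since $a w_{j-1}a \neq a^k$ for any $k$ as $w_{j-1}\neq a$). A secondary subtlety is the existence of the index $j$: one should check that if no such $j$ exists, then every new letter appended either repeats the previous letter or is globally new, which is exactly the defining property of a block word — so the contrapositive hypothesis indeed supplies $j$. Once these points are settled, combining the per-step lower bound (one for all $i$, two for $i=j$) with Theorem \ref{t3} closes the argument. Alternatively, one could phrase the whole ``only if'' direction via Lemma \ref{lemeq} by showing a non-block word always has a scattered palindromic subword that is not a palindromic factor, but the direct counting via prefixes seems cleanest and most self-contained.
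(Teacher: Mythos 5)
Your proposal is correct and rests on the same key idea as the paper's proof: when a letter $a$ recurs after an intervening distinct letter $b\neq a$, the extra scattered palindromic subword $aba$ appears alongside the guaranteed power of $a$, while a block word has only letter powers as scattered palindromic subwords. The paper organizes the forward direction as an induction that peels off the last letter and uses Theorem \ref{t3} to apply the inductive hypothesis to the prefix, whereas you argue the contrapositive at the first position where the block structure fails; this is only a cosmetic reorganization of the same argument.
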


\begin{proof}
Let  $w$ be a finite word such that $SP(w)=|w|$.  We show that $w$ is a block word by induction on the length of  $w$. Let $|w| = n$. The base case for $n=1$ is trivial. We assume that the result holds for all the words of length $k$. Let $w=w'a$ be a word of length $k+1$, where $a\in \Sigma$, such that $SP(w)=k+1$. Now, an extra scattered palindromic subword  $a^{m+1}$, where $m=|w'|_{a}$,  is formed on the concatenation of $a$ to $w'$. Hence, as $SP(w)=k+1$, by Theorem \ref{t3}, we have $SP(w')=k$.  By induction, $w'$ is a block word.  We have two cases.
\begin{itemize}
    \item If $w'=w''a^{k-|w''|}$, where $w''\in \Sigma^+$ is a block word such that $|w''|_a=0$, then $w$ is a block word, and we are done.
    \item If $w'=v_1a^iv_2$, where $i\ge 1$, $v_1,  v_2\in \Sigma^*$ such that $|v_2|>0$ and $|v_2|_a=0$, then  let $b\in Alph(v_2)$ such that $b\neq a\in \Sigma $. Note that $|v_1|_b=0$ because $w'$ is a block word. In this case, on the concatenation of $a$ to $w'$, along with  $a^{m+1}$, where $m=|w'|_{a}$, the scattered palindromic subword $aba$ is also formed, which contradicts the fact that $SP(w)=k+1$. Hence, such a $w'$ does not exist.
\end{itemize}
Thus, we conclude that $w$ is a block word. Conversely, every block word $w$ is of the type $w=u_1u_2 \cdots u_r$, where  $u_i=a_i^{n_i}$ such that $a_i\in \Sigma$ are distinct for $1\leq i\leq r$. In such a word $w$, 
$$\SPAL(w)=\{ a_i^{m}; 1\leq m\leq n_i | 1\leq i\leq r\}$$
Hence, $SP(w)  =|w|$.
\end{proof}

We have an immediate result to Lemma \ref{lemeq} and  Theorems  \ref{t3} and \ref{t12}.

\begin{corollary}

For a finite word $w$, $\SPAL(w)=\PAL(w)$ if and only if $w$ is a block word.

\end{corollary}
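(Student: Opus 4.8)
The statement to prove is the Corollary: for a finite word $w$, $\SPAL(w)=\PAL(w)$ if and only if $w$ is a block word.

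The plan is to simply chain together the results that immediately precede the corollary, since it is explicitly labeled as an immediate consequence of Lemma~\ref{lemeq} and Theorems~\ref{t3} and~\ref{t12}. First I would invoke Lemma~\ref{lemeq}, which asserts that $\SPAL(w)=\PAL(w)$ holds precisely when $SP(w)=P(w)$. So it suffices to characterize when the cardinalities agree.

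Next, I would bring in Theorem~\ref{t3}, which gives the chain $P(w)\le |w|\le SP(w)$. From this, $SP(w)=P(w)$ forces both inequalities to be equalities, so in particular $SP(w)=|w|$; conversely $SP(w)=|w|$ combined with Proposition~\ref{t2} (i.e.\ $P(w)\le|w|$) and the fact $P(w)\le SP(w)$ pins $P(w)=|w|=SP(w)$. Hence $SP(w)=P(w)$ is equivalent to $SP(w)=|w|$. Finally, Theorem~\ref{t12} says $SP(w)=|w|$ if and only if $w$ is a block word, closing the loop.

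There is essentially no obstacle here: the corollary is purely a bookkeeping synthesis of the three prior results, with the only mild subtlety being the need to use the sandwich inequality in both directions (to get from the equality of the two palindrome counts down to $SP(w)=|w|$ and back up). I would just make sure to state explicitly that $P(w)\le SP(w)$ holds because every palindromic factor is a scattered palindromic subword, which was noted right before Proposition~\ref{t2}.

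\begin{proof}
By Lemma~\ref{lemeq}, $\SPAL(w)=\PAL(w)$ if and only if $SP(w)=P(w)$. Since every palindromic factor of $w$ is also a scattered palindromic subword of $w$, we have $P(w)\le SP(w)$, and by Theorem~\ref{t3}, $P(w)\le |w|\le SP(w)$. If $SP(w)=P(w)$, then the chain $P(w)\le |w|\le SP(w)=P(w)$ collapses to equalities, so $SP(w)=|w|$. Conversely, if $SP(w)=|w|$, then Proposition~\ref{t2} gives $P(w)\le |w|=SP(w)$, while $P(w)\ge |w|$ would contradict Proposition~\ref{t2} only if strict; in fact $|w|=SP(w)\ge P(w)$ and $|w|=SP(w)$ together with Theorem~\ref{t3} applied as $P(w)\le|w|$ and the reverse bound $|w|\le SP(w)=|w|$ force $P(w)=|w|=SP(w)$, hence $SP(w)=P(w)$. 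Thus $\SPAL(w)=\PAL(w)$ if and only if $SP(w)=|w|$, which by Theorem~\ref{t12} holds if and only if $w$ is a block word.
\end{proof}
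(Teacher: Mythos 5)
Your forward direction is exactly the paper's intended ``immediate'' derivation: $\SPAL(w)=\PAL(w)$ gives $SP(w)=P(w)$ by Lemma~\ref{lemeq}, the sandwich $P(w)\le|w|\le SP(w)$ of Theorem~\ref{t3} then collapses to $SP(w)=|w|$, and Theorem~\ref{t12} yields that $w$ is a block word. That half is correct.

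The converse, however, contains an invalid step. From ``$w$ is a block word'' you get $SP(w)=|w|$ via Theorem~\ref{t12}, and the cited inequalities then give only $P(w)\le|w|=SP(w)$; nothing in Proposition~\ref{t2} or Theorem~\ref{t3} forces the remaining inequality $P(w)\ge|w|$ to be an equality. Your sentence claiming that $P(w)\le|w|$ together with $|w|\le SP(w)=|w|$ ``force $P(w)=|w|=SP(w)$'' is not a valid deduction --- the sandwich is consistent, a priori, with $P(w)<|w|=SP(w)$. What closes the gap is an extra (easy) fact about block words that you never state: for $w=a_1^{n_1}\cdots a_r^{n_r}$ with the $a_i$ distinct, every word $a_i^m$ with $1\le m\le n_i$ is a palindromic \emph{factor} of $w$, so $P(w)\ge\sum n_i=|w|$ and hence $P(w)=|w|=SP(w)$; equivalently, the proof of Theorem~\ref{t12} computes $\SPAL(w)=\{a_i^m : 1\le m\le n_i,\ 1\le i\le r\}$, and each element of this set is a factor, giving $\SPAL(w)\subseteq\PAL(w)$ directly. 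Either one-line observation repairs the argument; without it the converse does not follow from the three cited results alone.
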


\section{Upper bound on the number of scattered  palindromic subwords in a given word}\label{sec-t}
 In this section, we investigate the bounds on the maximum number of  scattered palindromic subwords of a particular length $t$ in a word of length $n$. We show that the maximum number of scattered palindromic subwords in a word depends on both the length and the  number of distinct letters in the word.

We first give an upper bound on the number of scattered palindromic subwords in a word in terms of Fibonacci numbers by proving  that
that at most $F_n$  extra scattered palindromic subwords can be created on the concatenation of a letter to a word of length $n-1$. We first recall certain properties of Fibonacci numbers from \cite{bookf}. Let $F_n$ denote the $n^{th}$ Fibonacci number, then  $F_1=1,\;F_2=1$,  $F_{n}=F_{n-1}+F_{n-2}$, for $n\geq 3$, and $\sum_{i=1}^{n}F_n=F_{n+2}-1$.  We have the following result.

\begin{lemma}\label{d1}
At most $F_n$  extra scattered palindromic subwords can be created on the concatenation of a letter to a word of length $n-1$.
\end{lemma}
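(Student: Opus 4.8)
The plan is to prove this by induction on $n$, tracking how scattered palindromic subwords of each length can be newly created. Write $w = w'a$ where $|w'| = n-1$ and $a \in \Sigma$. A scattered palindromic subword $u$ of $w$ that is not a scattered palindromic subword of $w'$ must use the final letter $a$; since $u$ is a palindrome, either $u = a$ (which is already in $\SPAL(w')$ unless $a$ does not occur in $w'$, contributing at most one new subword of length $1$), or $u$ has the form $u = a v a$ for some palindrome $v$ with $v \prec w'$. Thus every genuinely new scattered palindromic subword of length $t \geq 2$ arises as $a v a$ where $v$ is a (possibly empty) scattered palindromic subword of $w'$ of length $t-2$. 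The number of new subwords of length $t$ is therefore at most the number of distinct palindromes $v$ of length $t-2$ with $v \prec w'$ that yield a $v$ for which $ava$ was not already a subword of $w'$.

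The key step is to bound, for each length, the number of such $v$. I would first handle the small lengths directly: at most one new subword of length $1$ (namely $a$ itself, and only if $a \notin Alph(w')$), and at most one new subword of length $2$ (namely $aa$). For $t \geq 3$ the new subwords of length $t$ are of the form $ava$ with $v$ a nonempty scattered palindromic subword of $w'$ of length $t-2$, so their count is at most $SP_{t-2}(w')$, the number of scattered palindromic subwords of length $t-2$ in $w'$. Summing, the total number of new scattered palindromic subwords created is at most
\[
1 + 1 + \sum_{t \geq 3} SP_{t-2}(w') \;\le\; 2 + SP(w') \;\le\; 2 + (F_{(n-1)+2} - 1) \;=\; F_{n+1} + 1,
\]
using the running upper bound $SP(w') \le F_{n+1}-1$ obtained from Lemma~\ref{d1} applied to all shorter words together with $\sum_{i=1}^{m} F_i = F_{m+2}-1$. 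This gives $F_{n+1}+1$, which unfortunately exceeds $F_n$, so the crude counting above is not sharp enough.

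The main obstacle — and where the real work lies — is tightening the count of new subwords of length $2$ and length $3$ so as not to lose the two additive units, and more importantly refining the bound at the top lengths. The honest approach is a finer induction: one shows that the new subwords of the form $ava$ cannot realize \emph{every} palindrome $v \prec w'$ independently, because the longest ones compete for the same occurrences of $a$ in $w'$, and in particular one cannot simultaneously create a new palindrome of the maximal possible length and new ones of all intermediate lengths. Concretely, I would argue that the number of lengths $t$ at which a new subword appears, weighted appropriately, telescopes against the Fibonacci recursion $F_n = F_{n-1} + F_{n-2}$: split the new subwords according to whether the inner palindrome $v$ already occurred flanked by $a$'s on one side, reducing the count at length $t$ to new subwords of $w'$ of length $< t-2$, and then invoke the induction hypothesis for $w'$ in the form "at most $F_{n-1}$ new ones were created at its last step" together with the global bound. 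The delicate bookkeeping to convert "$2 + SP(w')$" into "$F_n$" is the crux; I expect it to require isolating the contribution of the single longest new palindrome and showing the remaining new palindromes number at most $F_{n-2}$ by a secondary induction on the structure of $w'$.
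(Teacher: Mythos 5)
Your setup is right---every new scattered palindromic subword other than the single letter $a$ has the form $ava$ with $v$ a (possibly empty) scattered palindromic subword of $w'$---but the proof is not complete: you arrive at the bound $2+SP(w')\le F_{n+1}+1$, observe yourself that it exceeds $F_n$, and then only sketch a programme (``delicate bookkeeping,'' a ``secondary induction on the structure of $w'$'') without carrying it out. The crux you are looking for is one concrete observation that you never make. Any embedding of a new palindrome $ava$ in $w=w'a$ must use the appended letter as its last $a$ and some occurrence of $a$ inside $w'$ as its first $a$; every occurrence of $a$ in $w'$ lies at or after the first one, so writing $w'=\alpha a\beta$ with $|\alpha|_a=0$, the inner palindrome $v$ is forced to be a scattered palindromic subword of $\beta$ alone, not of all of $w'$. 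Since $|\beta|\le n-2$, building $\beta$ letter by letter and applying the induction hypothesis gives $SP(\beta)\le\sum_{i=1}^{n-2}F_i=F_n-1$; adding one for the possible new subword $aa$ (which arises only when $|\beta|_a=0$) yields at most $F_n$ new palindromes, while the case $|w'|_a=0$ contributes only the single new subword $a$. This is exactly the paper's argument: the shortening from $|w'|=n-1$ to $|\beta|\le n-2$ is what produces the Fibonacci recursion, and without it a count over all of $\SPAL(w')$ cannot be pushed below roughly $F_{n+1}$.

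A secondary inaccuracy: your ``$1+1$'' for lengths $1$ and $2$ overcounts, since $a$ is a new subword only when $a\notin Alph(w')$, in which case $aa$ is not a subword of $w$ at all; so at most one of the two arises. But this saves only one unit and does not rescue the argument---the restriction of the inner palindrome to $\SPAL(\beta)$ is the essential missing step.
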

\begin{proof}
We prove the result by induction on the length of the word. There can be at most $F_2=1$  extra scattered palindromic subword created on the concatenation of a letter to a word of length $1$. Assume the result to be true for words of length less than $n$. Let $w$ be a word of length $n$ such that $w=w'x$ for some $x\in \Sigma$ and $w'\in \Sigma^*$. Let $n_1$ be 
the number of scattered palindromic subwords added by the concatenation of $x$ to the word $w'$. If $|w'|_x =0$, then $n_1=1\leq F_n$. If $|w'|_x \neq 0$, then $w' = \alpha x\beta $ such that $|\alpha|_x=0$ and $\beta\in \Sigma^*$. Note that  $n_1\leq SP(\beta)+1$, as such scattered palindromic subwords are of the form $xpx$, where $p$ is a scattered palindromic subword in $\beta$, or $xx$ if $|\beta|_x=0$. Since $|\alpha x\beta|= n-1$ and $|\beta|\leq n-2$,  by induction, $SP(\beta)\leq \sum_{i=1}^{n-2}F_i=F_n-1$. Hence, $n_1\leq F_n-1+1=F_n$.
\end{proof}
For example, it is clear that the concatenation of $a_1$ to the word $a_1a_2a_3a_2$ leads to $F_5=5$ extra scattered palindromic subwords which are $a_1a_1, \;a_1a_2a_1,\; a_1a_3a_1$, $a_1a_2a_2a_1$ and $a_1a_2a_3a_2a_1$. We deduce the following from Lemma \ref{d1}.
\begin{theorem}\label{p12}
A word of length $n$ has at most $F_{n+2}-1$ scattered palindromic subwords.  
\end{theorem}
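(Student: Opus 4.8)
The plan is to derive Theorem \ref{p12} directly from Lemma \ref{d1} by a telescoping/induction argument on the length of the word, exactly mirroring how Proposition \ref{t2} was deduced from Observation \ref{oo1}. First I would set up the notation: for a word $w$ of length $n$, write $w = w_1 w_2 \cdots w_n$ and let $R_l = w_1 \cdots w_l$ be its prefix of length $l$, with $R_0 = \lambda$. The quantity $SP(R_l) - SP(R_{l-1})$ counts exactly the number of \emph{new} scattered palindromic subwords created when $w_l$ is concatenated to $R_{l-1}$, a word of length $l-1$. By Lemma \ref{d1}, this difference is at most $F_l$ for each $l \geq 2$, and for $l = 1$ it equals $1 = F_1$ (the single letter $w_1$). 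Since $SP(\lambda) = 0$, summing the telescoping identity $SP(w) = SP(R_n) = \sum_{l=1}^{n} \bigl(SP(R_l) - SP(R_{l-1})\bigr)$ gives $SP(w) \leq \sum_{l=1}^{n} F_l$.

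Then I would invoke the recalled identity $\sum_{i=1}^{n} F_i = F_{n+2} - 1$ to conclude $SP(w) \leq F_{n+2} - 1$, which is the claimed bound. One small point worth being careful about: Lemma \ref{d1} as stated gives the bound $F_n$ for concatenation onto a word of length $n-1$, i.e., the $l$-th step (producing a word of length $l$) contributes at most $F_l$; the base case $l=1$ needs to be handled separately since Lemma \ref{d1}'s induction starts at length $1$, but contributing one new subword ($w_1$ itself) matches $F_1 = 1$, so the sum really does start cleanly at $F_1$.

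I do not anticipate a genuine obstacle here — the content is entirely in Lemma \ref{d1}, and this theorem is a bookkeeping consequence. The only thing to watch is making sure the index of the Fibonacci sum lines up with the length so that the final answer is $F_{n+2}-1$ rather than $F_{n+1}-1$ or similar; that is settled by checking that the $l$-th letter contributes at most $F_l$ and that $l$ ranges over $1, \dots, n$. I would also remark that the bound is not claimed to be tight — indeed it cannot be sharp for alphabets of bounded size, which motivates the refined analysis of the next section — but as an unconditional upper bound valid for all words of length $n$ it is the natural Fibonacci-type estimate.
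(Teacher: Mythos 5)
Your proof is correct and is essentially the paper's own argument: the paper simply states that Theorem \ref{p12} is deduced from Lemma \ref{d1}, and the intended deduction is exactly your telescoping sum $SP(w)\leq\sum_{l=1}^{n}F_l=F_{n+2}-1$ over the prefixes $R_l$. Your care with the base case $l=1$ (one new subword, matching $F_1=1$) correctly settles the indexing.
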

 The authors in  \cite{DROUBAY2001539} proved  that a finite word $w$ can have at most $|w|$ distinct non-empty palindromic factors irrespective of the number of distinct letters in the word.  We present  Tables \ref{Table 1} and \ref{Table 2} to illustrate that the maximum number of scattered palindromic subwords in a word depends on both the length $n$ and the number of distinct letters $q$ in the word. We also give the set of words which have the  maximum number of scattered palindromic subwords for a particular values of $n$ and $q$. Let $a_i\in \Sigma$ be distinct and  $SP^{n,q}\;=\;\max\{SP(w)\;|\;w\in \Sigma^n \text{ and } |Alph(w)|=q\}$.
\begin{table}[h]
 \begin{center}
\begin{tabular}{|c|c|c|}
 \hline $q$& $SP^{3,q}$& Words\\
  \hline
   $1$&$3$& $a_1a_1a_1$ \\
   \hline
    $2$&$4$&$a_1a_2a_1$ \\
   \hline
    $3$&$3$&$a_1a_2a_3$ \\
    \hline
\end{tabular}\hspace{1 cm}
\begin{tabular}{|c|c|c|}
 \hline $q$& $SP^{4,q}$& Words\\
  \hline
   $1$&$4$& $a_1a_1a_1a_1$ \\
   \hline
    $2$&$6$&$a_1a_2a_1a_2, a_1a_2a_2a_1$ \\
   \hline
    $3$&$6$&$a_1a_2a_3a_1$\\\hline
    $4$&$4$&$a_1a_2a_3a_4$\\\hline
   
    \hline
\end{tabular}
\end{center}
   \caption{ Maximum scattered palindromic subwords in the words of length $3$ and $4$}
    \label{Table 1}
\end{table}

\begin{table}[h]
\begin{center}
\begin{tabular}{|c|c|c|}
 \hline $q$& $SP^{5,q}$& Words\\
  \hline
   $1$&$5$& $a_1a_1a_1a_1a_1$ \\
   \hline
    $2$&$9$&$a_1a_2a_1a_2a_1$ \\
   \hline
    $3$&$10$&$a_1a_2a_3a_2a_1$ \\\hline
    $4$&$8$&$a_1a_2a_3a_4a_1 $\\\hline
    $5$&$5$&$a_1a_2a_3a_4a_5$ \\
    \hline
\end{tabular}\hspace{1 cm}
\begin{tabular}{|c|c|c|}
 \hline $q$& $SP^{6,q}$& Words\\
  \hline
   $1$&$6$& $a_1a_1a_1a_1a_1a_1$ \\
   \hline
    $2$&$12$&$a_1a_2a_1a_2a_1a_2, a_1a_2a_1a_1a_2a_1$ \\
   \hline
    $3$&$14$&$a_1a_2a_3a_3a_2a_1$ \\\hline
    $4$&$14$&$a_1a_2a_3a_4a_2a_1 $\\\hline
    $5$&$10$&$a_1a_2a_3a_4a_5a_1$ \\
     \hline
    $6$&$6$&$a_1a_2a_3a_4a_5a_6$ \\
    \hline
\end{tabular}
\end{center}
 \caption {Maximum scattered palindromic subwords in the words of length $5$ and $6$}
    \label{Table 2}
\end{table}
In \cite{Palindromic}, authors have studied the minimum length of the longest  scattered palindromic subword among all binary words of length $n$ which forbid some number of consecutive letters. They also presented several conjectures on $q$-ary words that suggest that the minimum length of the longest scattered palindromic subword varies with $q$. The bound given in Theorem \ref{p12} does not depend on $q$ and it is evident from Tables \ref{Table 1} and \ref{Table 2} that the bound is not tight. We now give an improved upper bound on the number of scattered palindromic subwords in a word in terms of $q$. For this, we count all the possible palindromes of a given length $t$ for the word with $q$ distinct letters. We have the following result, which can be proved by induction on $t$. 

\begin{lemma}\label{l1}
  There are $q^{\ceil{\frac{t}{2}}}$ palindromes of length $t$ over $\Sigma$ such that $|\Sigma|=q$.
\end{lemma}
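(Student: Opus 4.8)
The plan is to count the number of palindromes of length $t$ over an alphabet of size $q$ directly, and then (if desired) phrase this as an induction on $t$ as the statement suggests. A palindrome $u = u_1 u_2 \cdots u_t$ of length $t$ is completely determined by its first $\ceil{t/2}$ letters, since the remaining letters are forced by the condition $u_i = u_{t+1-i}$. I would split into the two parity cases. If $t = 2s$ is even, then $u = v v^R$ where $v = u_1 \cdots u_s$ is an arbitrary word of length $s = t/2 = \ceil{t/2}$; conversely every such $v$ yields a distinct palindrome, so there are exactly $q^s = q^{\ceil{t/2}}$ of them. If $t = 2s+1$ is odd, then $u = v\, a\, v^R$ where $v = u_1 \cdots u_s$ is arbitrary of length $s$ and $a \in \Sigma$ is the middle letter, giving $q^s \cdot q = q^{s+1} = q^{\ceil{t/2}}$ palindromes, and again the correspondence is a bijection.

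To present this as the induction on $t$ that the statement mentions, I would take as base cases $t=1$ (the $q$ length-one words, and $\ceil{1/2}=1$) and $t=2$ (the $q$ palindromes $aa$, and $\ceil{2/2}=1$). For the inductive step, given a palindrome $u$ of length $t$, deleting its first and last letters yields a palindrome of length $t-2$; conversely, a palindrome $u'$ of length $t-2$ can be extended to a palindrome of length $t$ by prepending any letter $a$ and appending the same letter $a$, in $q$ ways, and every length-$t$ palindrome arises uniquely this way. Hence the count $N_t$ of length-$t$ palindromes satisfies $N_t = q \cdot N_{t-2}$. Combined with the base cases this gives $N_t = q^{\ceil{t/2}}$, since $\ceil{t/2} = \ceil{(t-2)/2} + 1$.

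I do not anticipate a genuine obstacle here; the only point requiring a little care is the bookkeeping of the ceiling function across the two parities (and the fact that the recursion $N_t = qN_{t-2}$ steps by $2$, so one genuinely needs both $t=1$ and $t=2$ as base cases rather than just one). Everything else is a routine bijection argument. I would write up the direct counting argument as the main proof and simply remark that it can equivalently be organized as an induction on $t$ via the peeling map $u \mapsto u_2 \cdots u_{t-1}$.
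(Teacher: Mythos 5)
Your proof is correct, and it follows exactly the route the paper indicates (the paper states the lemma without proof, remarking only that it ``can be proved by induction on $t$''): your direct bijection via the first $\ceil{t/2}$ letters and your induction via the peeling map $u \mapsto u_2\cdots u_{t-1}$ with $N_t = q\,N_{t-2}$ are both sound, and you correctly note the need for the two base cases $t=1,2$. Nothing to add.
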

Hence, one can conclude the following.
\begin{remark}\label{r1}
 In a word $w$ of length $n$ and $|Alph(w)|=q$,  we have, $SP_t(w)\leq q^{\ceil{\frac{t}{2}}}$ for $1\leq t\leq n$ and $SP(w)\leq \sum_{t=1}^{n} q^{\ceil{\frac{t}{2}}}$. 
\end{remark}
However, one can observe that this is a loose bound. We try to improve the bound by finding tight bounds on the number of scattered palindromic subwords of a particular length in a word. We first give a general result that relates the number of scattered palindromic subwords of length $t$  and $t+k$, for $k\geq 1$, in a given word that will be used later in the paper.

\begin{lemma}\label{hh}
Let $w \in \Sigma^+$. The following are true. 
\begin{itemize}
     \item If $SP_t(w)=0$, then $SP_{t+k}(w)=0$ for $k\geq 1$.
     \item For $t$ odd, $SP_t(w)\geq SP_{t+1}(w)$.

\end{itemize}
\end{lemma}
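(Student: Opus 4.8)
The plan is to reduce both statements to one elementary observation about an individual palindrome: if $u=u_1u_2\cdots u_m\in\Sigma^+$ is a palindrome, then one can delete a single letter near its centre and still have a palindrome, namely of length $m-1$, which is a scattered subword of $u$. Concretely, if $m$ is odd one deletes the central letter $u_{(m+1)/2}$, and if $m=2r$ is even one deletes $u_{r+1}$ (note $u_r=u_{r+1}$ since $u_i=u_{m+1-i}$); in each case a direct index computation using $u_i=u_{m+1-i}$ shows the result is again a palindrome. Iterating this one-step reduction, a palindrome of length $m$ contains, for every $\ell$ with $1\le \ell\le m$, a scattered subword that is a palindrome of length $\ell$. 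Since the scattered-subword relation $\prec$ is transitive, I will freely pass from a scattered palindromic subword of $w$ to one of its own palindromic scattered subwords and still land inside $\SPAL(w)$. It is convenient to write $\SPAL_t(w)$ for the set of scattered palindromic subwords of $w$ of length $t$, so $SP_t(w)=|\SPAL_t(w)|$.

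For the first bullet I argue by contraposition. Suppose $SP_{t+k}(w)\neq 0$ for some $k\ge 1$ and pick $u\in\SPAL(w)$ with $|u|=t+k$. By the observation above, $u$ has a scattered subword $u'$ that is a palindrome of length $t$; transitivity of $\prec$ gives $u'\prec w$, hence $u'\in\SPAL_t(w)$ and $SP_t(w)\ge 1$. Therefore $SP_t(w)=0$ forces $SP_{t+k}(w)=0$ for all $k\ge 1$.

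For the second bullet, with $t$ odd write $t+1=2r$ and build an injection $\phi\colon \SPAL_{t+1}(w)\to \SPAL_t(w)$. Given an even-length palindrome $v=v_1\cdots v_{2r}\prec w$, set $\phi(v)=v_1\cdots v_r v_{r+2}\cdots v_{2r}$, i.e. delete $v_{r+1}$. By the index check, $\phi(v)$ is a palindrome of length $2r-1=t$, and since it is a scattered subword of $v$ it is a scattered subword of $w$, so $\phi(v)\in\SPAL_t(w)$. Injectivity is immediate: if $z=z_1\cdots z_{2r-1}=\phi(v)$, then $z_r=v_r=v_{r+1}$, so $v$ is recovered from $z$ by duplicating its central letter, $v=z_1\cdots z_r z_r z_{r+1}\cdots z_{2r-1}$. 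Hence $SP_{t+1}(w)=|\SPAL_{t+1}(w)|\le |\SPAL_t(w)|=SP_t(w)$.

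The only point requiring care is the parity bookkeeping: one must verify that deleting a single central letter swaps the parity of the length while preserving the palindrome property, and — for the injection in the second bullet — that on an even-length palindrome this deletion is reversible, which is exactly what makes $\phi$ injective. Beyond that, the proof is just transitivity of $\prec$ together with a counting argument, so I do not anticipate a genuine obstacle.
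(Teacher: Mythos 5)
Your proof is correct and takes essentially the same route as the paper's: both parts rest on deleting a central letter of a palindrome to obtain a palindrome one shorter, and the injection for the second bullet is exactly the paper's map (deleting $v_r$ versus $v_{r+1}$ is immaterial since $v_r=v_{r+1}$ in an even-length palindrome), with injectivity justified by reinserting a duplicate of the central letter. Your contrapositive phrasing of the first bullet and the explicit inverse for $\phi$ are only cosmetic differences.
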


\begin{proof}
Consider a word $w \in \Sigma^+$.
\begin{itemize}
    \item Let $p=p_1p_2\cdots p_{t+1}$ be any arbitrary scattered palindromic subword of $w$ of length $t+1$ where $p_i\in \Sigma$. On the removal of the letter at the position $i$ of $p$, where $i = \ceil{\frac{t+1}{2}}$, we get a   scattered palindromic subword of length $t$. Hence, if $SP_t(w)=0$, then $SP_{t+1}=0$, and thus, $SP_{t+k}=0$ for $k\geq 1$.
    
\item Given that $t$ is odd, we define a function $f:A\to B$, from  set $A$ of the scattered palindromic subwords of length $t+1$ of $w$ to  set $B$ of the scattered palindromic subwords of length $t$ of $w$  such that $f(x) =y$ where $y$ is obtained from $x$ by deleting the  $\frac{t+1}{2}$-th letter. Note that as $t$ is odd, for each  $y\in B$, there can be at most one $x\in A$ such that $f(x) =y$. Hence, in a word $w$, if $t$ is odd, $SP_t(w)\geq SP_{t+1}(w)$. 
\end{itemize}
\end{proof}
Note that  when $t$ is even, $SP_t(w)$ may be less than $SP_{t+1}(w)$ for a word $w$. For example,  if $w=abacba$, then $SP_4(w)=1$ and  $SP_{5}(w)=2$.

 We now find the maximum number of scattered palindromic subwords of a particular length in a word $w$ of length $n$ and also find the number of distinct letters in the word which can achieve the maximum number. We use  the following result.
\begin{remark}
\label{oo}
If $w$ is a palindrome of length $t$ with $|Alph(w)|=q$, then $1\leq q\leq \ceil{\frac{t}{2}}.$
\end{remark}

We have the following, which can be obtained by direct observation and Remark \ref{oo}.
\begin{proposition}\label{l2}
Given a word $w$ of length $n$ with $|Alph(w)|=q$.
\begin{enumerate}
    
    \item  $SP_1(w)\leq n$. The bound is achieved if and only if $q=n$.
    \item  $SP_2(w)\leq   \floor{\frac{n}{2}}$. The bound can be achieved  only if $q=\floor{\frac{n}{2}}$ for $n$ even and $q=\ceil{\frac{n}{2}}$ for $n$ odd.
  
    \item  $ SP_n(w)\leq 1$.  The bound can be achieved only if $1\leq q\leq \ceil{\frac{n}{2}}$.
\end{enumerate}
\end{proposition}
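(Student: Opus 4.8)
The plan is to prove each of the three bounds by elementary counting and then pin down exactly which values of $q$ are compatible with equality, using Remark \ref{oo} as the main structural input. For part (1), I would argue that $SP_1(w)$ counts the distinct single letters appearing in $w$, so $SP_1(w) = |Alph(w)| = q$; since a word of length $n$ has at most $n$ distinct letters, $SP_1(w) \le n$, with equality exactly when every position carries a distinct letter, i.e. $q = n$. This is immediate and needs no real work.

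For part (2), the key observation is that a scattered palindromic subword of length $2$ is exactly a word of the form $xx$ with $x \in Alph(w)$; so $SP_2(w)$ equals the number of letters that occur at least twice in $w$. If $m$ letters occur at least twice, then $w$ has at least $2m$ positions, so $m \le \lfloor n/2\rfloor$, giving $SP_2(w) \le \lfloor n/2\rfloor$. For the constraint on $q$: to have $SP_2(w) = \lfloor n/2\rfloor$ every one of the $\lfloor n/2\rfloor$ repeated letters consumes two positions, leaving at most $n - 2\lfloor n/2\rfloor$ positions for additional (necessarily new) letters; this residue is $0$ when $n$ is even and $1$ when $n$ is odd, forcing $q = \lfloor n/2\rfloor$ in the even case and $q \le \lfloor n/2\rfloor + 1 = \lceil n/2\rceil$ with the extremal witness $a_1a_2\cdots a_{\lceil n/2\rceil}$-type pattern achieving $q = \lceil n/2\rceil$ in the odd case. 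I would state this as ``only if'' rather than ``if and only if'' since the condition on $q$ is necessary but the precise realizability is what the phrasing already hedges.

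For part (3), a scattered palindromic subword of length $n$ in a word of length $n$ must use all $n$ positions, hence must equal $w$ itself; so $SP_n(w) = 1$ if $w$ is a palindrome and $SP_n(w) = 0$ otherwise, giving $SP_n(w) \le 1$ in all cases. When the bound is achieved, $w$ is a palindrome of length $n$, and Remark \ref{oo} immediately gives $1 \le q \le \lceil n/2 \rceil$.

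The only mild subtlety — the ``main obstacle,'' though it is minor — is getting the $q$-constraint in part (2) stated correctly for the two parities and matching it against the claimed form; I would double-check this against the small cases in Tables \ref{Table 1} and \ref{Table 2} (e.g. $n=5$, $q=3=\lceil 5/2\rceil$ realizes $SP_2 = 2 = \lfloor 5/2\rfloor$ via $a_1a_2a_3a_2a_1$, and $n=6$, $q=3=\lfloor 6/2\rfloor$ realizes $SP_2 = 3$ via $a_1a_2a_3a_3a_2a_1$) to be sure the necessary-condition statement is tight in the intended sense. Everything else is a one-line identification of length-$1$, length-$2$, and length-$n$ palindromic subwords with their obvious combinatorial meaning.
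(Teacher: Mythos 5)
The paper gives no written proof of this proposition---it is introduced as following ``by direct observation and Remark \ref{oo}''---so your writeup is essentially the intended argument made explicit. Parts (1) and (3) are correct and are exactly the observations the authors have in mind: $SP_1(w)=q\le n$ with equality precisely when $q=n$; a scattered subword of length $n$ must be $w$ itself, so $SP_n(w)\le 1$ with equality only when $w$ is a palindrome, and then Remark \ref{oo} gives $1\le q\le\ceil{\frac{n}{2}}$. Your identification of $SP_2(w)$ with the number of letters occurring at least twice, and the resulting bound $SP_2(w)\le\floor{\frac{n}{2}}$, are also correct.

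The genuine problem is the necessity claim in part (2) for odd $n$, and you half-noticed it yourself. Your counting shows that achieving $SP_2(w)=\floor{\frac{n}{2}}$ forces $\floor{\frac{n}{2}}$ letters to occur at least twice, hence $q\ge\floor{\frac{n}{2}}$, and that the single leftover position when $n$ is odd gives $q\le\floor{\frac{n}{2}}+1=\ceil{\frac{n}{2}}$. But that leftover position need not carry a fresh letter: it can be a third occurrence of one of the already-repeated letters. So your argument only yields $q\in\{\floor{\frac{n}{2}},\ceil{\frac{n}{2}}\}$ for odd $n$, not the exact equality $q=\ceil{\frac{n}{2}}$ that the proposition asserts, and no amount of checking the tables will close this gap because the stronger claim is false: for $n=5$ the word $a_1a_1a_1a_2a_2$ has $q=2=\floor{\frac{5}{2}}$ and $SP_2=2=\floor{\frac{5}{2}}$. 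The necessary condition your argument actually establishes (and the correct one) is $\floor{\frac{n}{2}}\le q\le\ceil{\frac{n}{2}}$; for even $n$ this collapses to $q=\frac{n}{2}$ as stated, but for odd $n$ it is strictly weaker than the statement. You should either prove the two-valued condition explicitly and flag the discrepancy with the proposition, or not leave the odd case at the hedge ``$q\le\ceil{\frac{n}{2}}$.''
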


We know from Remark \ref{r1} that $SP_t(w)\leq q^{\ceil{\frac{t}{2}}}$.  We investigate $SP_t(w)$, for some values of $t$ such that $3\leq t\leq n-1$. Note that by Lemma \ref{hh}, as $SP_{t+1}(w)\leq SP_t(w)$  for $t$ odd, we have, $SP_{2}(w)\leq SP_1(w)\leq n$. But the value of $SP_{2}(w)$ never reaches the number $n$.

We start with scattered palindromic subwords of length $3$ in a word.

\begin{proposition}\label{i}
In a word $w$ of length $n$ with $|Alph(w)|=q$, 
\[ SP_{3}(w)\leq  \left\{
\begin{array}{llll}
    \frac{n^2-2n}{4}, & & & if\; n \;is \;even,  \\
    (\frac{n-1}{2})^2, & & & if\; n\; is \;odd.
\end{array} 
\right. \]
For $n\geq 5$, the bound is achieved only if $q=\ceil{\frac{n}{2}}$.

\end{proposition}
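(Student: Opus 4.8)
The plan is to bound the number of scattered palindromic subwords of length $3$ by a combinatorial argument on their "shape." Every length-$3$ palindrome is of the form $aba$ with $a,b\in\Sigma$ (possibly $a=b$), so a scattered palindromic subword of length $3$ in $w$ is determined by the ordered pair $(a,b)$ where $a$ is the repeated outer letter and $b$ the middle letter. Thus $SP_3(w)$ is at most the number of such realizable pairs. First I would note that for $aba$ ($a\ne b$) to be a scattered subword, $a$ must occur at least twice in $w$, say in positions $i<j$, and $b$ must occur at some position strictly between the first and last occurrence of $a$; and $aaa$ requires $|w|_a\ge 3$. So the outer letter $a$ must satisfy $|w|_a\ge 2$. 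Let $S=\{a\in Alph(w): |w|_a\ge 2\}$ and $|S|=s$; write $|w|_a=m_a\ge 2$ for $a\in S$, so $\sum_{a\in S} m_a \le n$ and $\sum_{a\in S}(m_a-1)\le n-s$.

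Next I would count, for a fixed outer letter $a\in S$, how many middle letters $b$ are possible. Here $b$ ranges over $Alph(w)\setminus\{a\}$ (if $b$ appears between two occurrences of $a$) together with $a$ itself when $m_a\ge 3$. The crude bound "$b$ can be any of the $q-1$ other letters, plus possibly $a$" gives $SP_3(w)\le s\cdot q$, which is not strong enough; the key refinement is to use the positions. The cleaner approach: sort $w$ and observe that each realizable palindrome $aba$ corresponds to choosing the outer letter $a$ (at most $s$ choices) and a middle letter $b$ which must occur between the extreme occurrences of $a$. I would instead bound the total differently: for each $a \in S$, the middle letters that give a genuine palindrome are at most $\min(q-1, \text{something})+[\,m_a\ge 3\,]$. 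The decisive inequality uses $s \le \lfloor n/2\rfloor$ (since each of the $s$ letters occurs $\ge 2$ times) and the observation, matching Remark \ref{oo}, that for the bound to be attained the word should be a palindrome with $q=\lceil n/2\rceil$ distinct letters, each middle-adjacent pair realizing a new $aba$.

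Concretely, I expect the argument to run: $SP_3(w) \le \sum_{a \in S}\bigl( (\text{number of distinct letters strictly between the first and last occurrence of }a) + [\,m_a\ge 3\,]\bigr)$, and then to bound the sum by maximizing over position configurations, which is an optimization over compositions of $n$. The extremal configuration is the "nested" palindrome $a_1 a_2 \cdots a_k a_{k} \cdots a_2 a_1$ (for $n=2k$) or $a_1\cdots a_k\cdots a_1$ (for $n=2k+1$), where letter $a_i$ is outer and $a_{i+1},\dots,a_k$ lie between its occurrences, contributing $k-i$ each; summing $\sum_{i=1}^{k-1}(k-i) = \binom{k}{2}$ gives $\binom{k}{2}$ for $n=2k$, i.e. $\frac{k(k-1)}{2}$ — but the claimed bound is $\frac{n^2-2n}{4} = k^2 - k = k(k-1)$, twice that, so the count per outer letter must actually be up to $2(k-i)$-ish once one accounts that both $a_i b a_i$ with $b$ between the two $a_i$'s AND the contributions from $a_i$ appearing three or more times are included, or that distinct ordered position-pairs of $a$ yield distinct middles. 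I would therefore recount carefully: a length-$3$ scattered palindrome is an unordered datum $(a,b)$, and the number of valid $(a,b)$ with fixed $a$ is at most $|Alph(w_{(a)})| $ where $w_{(a)}$ is the factor of $w$ spanned by the occurrences of $a$; summing these and using convexity/rearrangement to push all mass into a nested structure yields the stated quadratic bounds, with equality forcing $q=\lceil n/2\rceil$.

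The main obstacle I anticipate is exactly this extremal analysis: proving that among all length-$n$ words, the nested-palindrome shape maximizes $\sum_{a\in S}|Alph(w_{(a)})|$ (suitably corrected for the $aaa$ terms) and that this maximum equals $\frac{n^2-2n}{4}$ or $(\frac{n-1}{2})^2$. I would handle it by an exchange argument: given any word, show that moving a letter's occurrences "inward" to make the occurrence-intervals nested does not decrease the count, reducing to the nested case where the sum is computed directly. The "only if $q=\lceil n/2\rceil$" clause for $n\ge 5$ then follows by checking that if $q<\lceil n/2\rceil$ the nested tower is shorter and the sum strictly smaller, while $q>\lceil n/2\rceil$ forces too many singleton letters (which cannot be outer letters and waste length), again making the sum strictly smaller; the small-$n$ exclusion $n\ge 5$ is because for $n=3,4$ the uniqueness of the optimal $q$ fails, consistent with Tables \ref{Table 1}.
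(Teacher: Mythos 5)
There is a genuine gap, and you flag it yourself without resolving it: your extremal configuration is wrong. The word that attains the bound is not the nested palindrome $a_1a_2\cdots a_ka_k\cdots a_2a_1$ but the shifted word $a_1a_2\cdots a_ka_1a_2\cdots a_k$ (for $n=2k$), in which the two occurrences of \emph{each} $a_i$ have \emph{all} $k-1$ other letters between them, giving $k(k-1)=\frac{n^2-2n}{4}$; the nested word gives only $\binom{k}{2}$, half as much (already visible at $n=4$: $a_1a_2a_1a_2$ has two length-$3$ palindromes, $a_1a_2a_2a_1$ has one). Consequently the proposed exchange argument --- ``moving occurrences inward to make the intervals nested does not decrease the count'' --- cannot be true: nesting strictly decreases the count here, and if the rearrangement did push everything to the nested shape it would ``prove'' an upper bound of $\binom{k}{2}$ that the shifted word violates. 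You notice the factor-of-two discrepancy but attribute it to miscounting $aaa$-terms or ordered position pairs, which is not the issue (a length-$3$ scattered palindrome is determined by the unordered pair of letters, exactly as you first set it up); the issue is that interleaved occurrence-intervals beat nested ones, so the whole ``convexity/rearrangement to a nested structure'' step has no correct version.

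Your initial framework is salvageable, but only along the lines the paper actually takes. If every letter occurs at most twice, with $m$ singletons, then the number of outer letters is $\frac{n-m}{2}$ and each admits at most $m+\frac{n-m}{2}-1$ middle letters, giving $\frac{n-m}{2}\bigl(m+\frac{n-m}{2}-1\bigr)$, which is maximized at $m=0$ ($n$ even) or $m=1$ ($n$ odd) and forces $q=\ceil{\frac{n}{2}}$ --- no positional optimization needed, just optimization over $m$. The genuinely separate difficulty, which your proposal does not address, is a letter $x$ with $|w|_x\ge 3$: the paper handles this by induction, deleting the first, last, and one intermediate occurrence of $x$, bounding the palindromes lost by $n-2$, and checking that $SP_3(w')+n-2$ falls strictly below the target bound (which also yields the ``only if'' clause, since equality can only occur in the all-multiplicities-at-most-two case). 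Without both the correct $m$-optimization and this induction for high-multiplicity letters, the proof is incomplete.
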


\begin{proof}
 We prove the result by induction on $n$. The base case for $n=3$ and $n=4$ can be verified through direct computation.  Let the result be true for all words of length less than $n$. Consider a word $w$ of length $n$. A scattered palindromic subword of length $3$ is either of the form $xxx$ or $xyx$, where $x\neq y\in \Sigma$.  We have the following cases:
\begin{enumerate}
\item [Case 1\;:\;]  There does not exist any scattered palindromic subword of the form $x^3$ in $w$ for some $x\in Alph(w)$:\\
    We only have to count the scattered palindromic subwords of the form $xyx$ where $x\neq y$. Now, for each $a_i\in Alph(w)$, $|w|_{a_i}\leq 2$. Let $m=|S|$, where $S=\{a_i \in Alph(w) :|w|_{a_i}=1\}$. Note that $n-m$ is always even. We count the scattered palindromic subwords of the type $a_ia_ja_i$, where $a_j\neq a_i\in \Sigma$. There are $\frac{n-m}{2}$ choices for $a_i$ and $m+{\frac{n-m}{2}}-1$  choices for $a_j$. Then, the maximum number of scattered palindromic subwords of length $3$ in $w$ is  $\frac{n-m}{2}(m+{\frac{n-m}{2}}-1)$. Note that $n$ is even if and only if $m$ is even. Then $\frac{n-m}{2}(m+{\frac{n-m}{2}}-1)$  is less than  or equal to $\frac{n^2-2n}{4}$ if $n$ is even and less and or equal to $(\frac{n-1}{2})^2$ if $n$ is odd
    for $m\geq 0$, and the maximum is achieved when $m=0$ if $n$ is even, and  $m=1$ if $n$ is odd, which implies $q=\ceil{\frac{n}{2}}$.
    
    \item [Case 2\;:\;]  The word $w$ has scattered palindromic subwords of the form $x^3$ for some $x\in  Alph(w)$: \\Note that there exists $x\in \Sigma$ such that $|w|_x\geq 3$. Let the first and the last occurrence of $x$ in $w$ be at the position $i_1$ and $i_3$, respectively, of $w$ and let $i_2$ be the position of any other occurrence of $x$ in $w$ such that  $i_1<i_2< i_3$. The scattered palindromic subwords of length $3$ that contain $x$ are of the form $xxx$, $xyx$ or $yxy$, where $y\neq x \in \Sigma$. It can be observed that the  maximum number of distinct scattered palindromic subwords of length $3$ that contain  $x$ in the subword of length $i_3-i_1+1$ starting from position $i_1$ of $w$ is less than or equal to the number of  positions  in between the positions $i_1$ and $i_3$ of $w$ which is $i_3-i_1-1$ as they are of the form $xxx$, $xyx$ or $yxy$, where $y\neq x \in \Sigma$. The scattered palindromic subwords of length $3$ that are left to be counted are of the form $yxy$ such that $y$ occurs in the prefix of $w$ of length $i_1-1$ or the suffix of $w$ of length $n-i_3$. Hence, the number of such scattered palindromic subwords is less than or equal to $n-i_3+ i_1-1$. So, the  maximum  number of scattered palindromic subwords that contain the occurrence of $x$ at $i_1$, $i_2$ and $i_3$ positions is less than or equal to $i_3-i_1-1+ n-i_3+ i_1-1 = n-2$. Now, remove these occurrences of $x$ at $i_1$, $i_2$ and $i_3$ positions. We are left with the word, say $w'$, of length $n-3$. Now, $SP_3(w)\leq SP_3(w')+n-2$. We have two cases:
    \begin{itemize}
        \item If $n$ is even, then $n-3$ is odd. $SP_3(w)\leq SP_3(w')+n-2\leq  (\frac{n-4}{2})^2+n-2<  \frac{n^2-2n}{4}$ for $n> 4$.
        \item If $n$ is odd, then $n-3$ is even. $SP_3(w)\leq SP_3(w')+n-2\leq  \frac{(n-3)^2-2(n-3)}{4}+n-2< (\frac{n-1}{2})^2$ for $n> 3$. 
    \end{itemize}
\end{enumerate}
    Hence, by induction, the result is true. It is also clear that the maximum is not achieved in Case 2 as the inequalities are strict. Hence, from Case 1, it is clear that the maximum is achieved only when $q=\ceil{\frac{n}{2}}$. 
\end{proof}    
    We now give examples of words that satisfy the bounds given in Proposition \ref{i}.
    Let $w$ be a word of length $n$ and $a_i\in \Sigma$ be distinct.
\begin{itemize}

    \item  For $n$ even, let $w= a_1a_2\cdots a_{\frac{n}{2}}    a_1a_2\cdots a_{\frac{n}{2}}$. The scattered palindromic subwords of length $3$ in $w$ are of the form $a_ia_ja_i$, where $i\neq j$. There are $\frac{n}{2}$ choices of $a_i$ and $\frac{n}{2}-1$ choices of $a_j$, which gives    $SP_{3}(w)=\frac{n^2-2n}{4}.$
    \item For $n$ odd, let $w = a_1a_2\cdots a_{\frac{n-1}{2}}a_{\frac{n+1}{2}}a_1a_2\cdots a_{\frac{n-1}{2}}$. The scattered palindromic subwords of length $3$ in $w$ are of the form $a_ia_ja_i$, where $i\neq j$.
There are $\frac{n-1}{2}$ choices of both $a_i$ and $a_j$, which gives $SP_{3}(w)=
     (\frac{n-1}{2})^2$.
\end{itemize} 

The following can be deduced by Lemma  \ref{hh} and Proposition \ref{i}.
\begin{corollary}\label{c1}
In a word $w$ of length $n$ with $|Alph(w)|=q$, 
\[ SP_{4}(w)\leq  \left\{
\begin{array}{llll}
    \frac{n^2-2n}{4}, & & & if\; n \;is \;even,  \\
    (\frac{n-1}{2})^2, & & & if\; n\; is \;odd.
\end{array} 
\right. \]
\end{corollary}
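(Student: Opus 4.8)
The plan is to deduce Corollary~\ref{c1} directly from Lemma~\ref{hh} together with Proposition~\ref{i}, treating separately the cases $n$ even and $n$ odd, since the lemma relates $SP_t$ to $SP_{t+1}$ only when $t$ is odd, while here we want to bound $SP_4(w)$, i.e. the case $t+1 = 4$, $t = 3$.

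\medskip

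First I would record the trivial half of the argument. By the second bullet of Lemma~\ref{hh}, taking $t = 3$ (which is odd), we always have $SP_4(w) \le SP_3(w)$ for every $w \in \Sigma^+$. Now apply Proposition~\ref{i}: if $n$ is even then $SP_3(w) \le \frac{n^2-2n}{4}$, and if $n$ is odd then $SP_3(w) \le (\frac{n-1}{2})^2$. Chaining the two inequalities gives exactly the bound claimed in the corollary, in both parities, with no further work. So the proof is essentially one line per case.

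\medskip

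The only subtlety worth a sentence is the low-length boundary: Proposition~\ref{i} is stated for words of length $n$ with the induction anchored at $n = 3, 4$, and its bound makes sense for all $n \ge 3$; for $n \le 3$ one has $SP_4(w) = 0$ trivially (a word of length $\le 3$ has no scattered subword of length $4$), so the inequality holds vacuously and nothing needs checking there. For $n \ge 4$ the bound from Proposition~\ref{i} is a genuine nonnegative quantity and the chain $SP_4(w) \le SP_3(w) \le (\text{bound})$ is valid. There is essentially no obstacle here — the statement is a formal corollary. If I wanted to say anything more, I would remark that, unlike Proposition~\ref{i}, Corollary~\ref{c1} makes no claim about which value of $q$ attains the bound, precisely because the map $f$ used in Lemma~\ref{hh} need not be surjective, so equality in $SP_4(w) \le SP_3(w)$ is not automatic and the extremal words for length-$4$ scattered palindromes need not coincide with those for length $3$.

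\medskip

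So concretely the proof I would write is: \emph{By Lemma~\ref{hh} applied with the odd value $t = 3$, $SP_4(w) \le SP_3(w)$ for every finite word $w$. Substituting the upper bound on $SP_3(w)$ from Proposition~\ref{i} — namely $\frac{n^2-2n}{4}$ when $n$ is even and $\bigl(\frac{n-1}{2}\bigr)^2$ when $n$ is odd — yields the stated inequality for $SP_4(w)$.} I expect the referee-facing difficulty to be nil; the main thing to get right is simply citing the odd-$t$ half of Lemma~\ref{hh} rather than the first bullet, and being careful that the direction of the inequality ($SP_{t+1} \le SP_t$) is the one we need.
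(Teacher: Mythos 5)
Your proposal is correct and matches the paper exactly: the paper deduces Corollary~\ref{c1} precisely by combining the odd-$t$ inequality $SP_4(w)\le SP_3(w)$ from Lemma~\ref{hh} with the bound on $SP_3(w)$ from Proposition~\ref{i}. Your extra remarks about small $n$ and the non-tightness of the bound are sensible but not needed.
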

However, no word of length up to  $20$ achieves the bound in Corollary \ref{c1} as verified by a computer program.
In a word of length $n$, there are at most $n-t+1$ palindromic factors of length $t$ for $1 \leq t\leq n$, i.e.,  the number of palindromic factors of length $t$ in a word  is less than or equal to $n$, whereas it can be observed from  Proposition \ref{i} that it is not true in general for scattered palindromic subwords.\\
We now find an upper bound for  the number of scattered palindromic subwords of length $n-1$ in a word of length $n$. 
\begin{proposition}\label{l11}
In a word $w$ of length $n$  with $|Alph(w)|=q$, 
\[ SP_{n-1}(w)\leq  \left\{
\begin{array}{llll}
 1 , & & & if\; n\; is \;odd,\\
    2, & & & if\; n \;is \;even 
   
\end{array} 
\right. \]
The equality can be achieved only if $2\leq  q\leq \frac{n+2}{2}$ for $n$ even, and  $1 \leq q\leq \ceil{\frac{n}{2}}$ for $n$ odd.
\end{proposition}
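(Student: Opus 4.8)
The plan is to use throughout that a scattered subword of $w=w_1w_2\cdots w_n$ of length $n-1$ is just $w$ with one letter deleted; writing $w^{(i)}=w_1\cdots w_{i-1}w_{i+1}\cdots w_n$, the number $SP_{n-1}(w)$ is the number of distinct palindromes among $w^{(1)},\dots,w^{(n)}$. I would first record two elementary facts: (a) for $i\le j$, $w^{(i)}=w^{(j)}$ if and only if $w_i=w_{i+1}=\cdots=w_j$; and (b) re-indexing the defining palindrome relations of $w^{(i)}$ back to the positions of $w$ shows that $w^{(i)}$ is a palindrome exactly when $w_a=w_b$ for every $\{a,b\}\subseteq\{1,\dots,n\}\setminus\{i\}$ satisfying one of: $a,b<i$ and $a+b=n$ (a \emph{left} pair); $a,b>i$ and $a+b=n+2$ (a \emph{right} pair); $a<i<b$ and $a+b=n+1$ (a \emph{crossing} pair).

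First I would treat the case $w=w^R$. Here $(w^{(n+1-i)})^R=w^{(i)}$ for every $i$, so if $w^{(i)}$ is a palindrome then $w^{(i)}=(w^{(i)})^R=w^{(n+1-i)}$, and by (a) all letters of $w$ in positions $\min(i,n+1-i)$ through $\max(i,n+1-i)$ coincide; this interval is symmetric about the centre of $w$, so it contains the middle position(s), and hence every $i$ with $w^{(i)}$ a palindrome lies in the single block of $w$ through the centre. All such $w^{(i)}$ are therefore equal, giving $SP_{n-1}(w)\le1$.

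Next, assume $w\ne w^R$ and run two pointers inward from the ends; let $(\ell,n+1-\ell)$ be the first mismatched pair, so $w_a=w_{n+1-a}$ for $1\le a\le\ell-1$, $w_\ell\ne w_{n+1-\ell}$, and $\ell\le\floor{n/2}$. The central claim is that every palindromic $w^{(i)}$ equals $w^{(\ell)}$ or $w^{(n+1-\ell)}$. If $\ell<i<n+1-\ell$ and $w^{(i)}$ were a palindrome, then $\{\ell,n+1-\ell\}$ would be a crossing pair and (b) would force $w_\ell=w_{n+1-\ell}$, impossible. If $i<\ell$ and $w^{(i)}$ is a palindrome, then for $a=i+1,i+2,\dots,\ell$ the set $\{a,n+2-a\}$ is a right pair (one checks $a<n+2-a$ because $a\le\ell\le\floor{n/2}$), so $w_a=w_{n+2-a}$; since the matched prefix--suffix already gives $w_{a-1}=w_{n+2-a}$, one gets $w_i=w_{i+1}=\cdots=w_\ell$ and hence $w^{(i)}=w^{(\ell)}$ by (a). The case $i>n+1-\ell$ is symmetric under reversal, so $SP_{n-1}(w)\le2$. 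To upgrade this to $SP_{n-1}(w)\le1$ when $n$ is odd, I would write $w=A\,c\,A^R$ with $A=w_1\cdots w_{\ell-1}$ and core $c=w_\ell\cdots w_{n+1-\ell}$ (the matched prefix--suffix makes $A^R$ the reverse of $A$); then $w^{(\ell)}=A\,c'\,A^R$ and $w^{(n+1-\ell)}=A\,c''\,A^R$, where $c'$ and $c''$ are $c$ with its first, resp.\ last, letter removed, so $w^{(\ell)}$ is a palindrome iff $c'$ is and $w^{(n+1-\ell)}$ is a palindrome iff $c''$ is. If $c'$ and $c''$ were both palindromes, combining the two palindrome conditions on $c$ forces $c_d=c_{d+2}$ for all admissible $d$; but $|c|=n-2\ell+2$ is odd when $n$ is odd, so this forces $c_1=c_{|c|}$, i.e.\ $w_\ell=w_{n+1-\ell}$, contradicting the mismatch. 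Hence at most one of $w^{(\ell)},w^{(n+1-\ell)}$ is a palindrome. I expect this is the main obstacle of the proof: setting up the criterion (b) cleanly and carrying the index bookkeeping correctly through the mismatch (in particular the boundary index $a=\ell$).

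Finally, for the constraint on $q$ at equality: fix a scattered palindromic subword $p$ of $w$ with $|p|=n-1$ (one exists once the bound is attained). As $w$ is obtained from $p$ by inserting one letter, $q=|Alph(w)|\le|Alph(p)|+1$, and Remark~\ref{oo} gives $|Alph(p)|\le\ceil{(n-1)/2}$, which is $(n-1)/2$ for $n$ odd and $n/2$ for $n$ even; hence $q\le(n+1)/2=\ceil{n/2}$ when $n$ is odd and $q\le(n+2)/2$ when $n$ is even. The lower bound $q\ge1$ is automatic, and $q\ge2$ in the even case follows because a unary word $x^n$ has the single length-$(n-1)$ subword $x^{n-1}$, so it cannot attain $SP_{n-1}(w)=2$.
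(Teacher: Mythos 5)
Your proof is correct, but it takes a genuinely different route from the paper's. The paper argues by induction on $n$: it takes two (for $n$ odd) or three (for $n$ even) length-$(n-1)$ palindromic subwords obtained by deleting positions $i_1<i_2(<i_3)$, strips the common matched prefix and suffix of $w$ to reduce to the boundary situation $i_1=1$ or the largest index equal to $n$, and then runs a second, nested induction on word equations of the form $xuu^R=vyv'$ with $vv'=v_1v_1^R$ to force two of the subwords to coincide. You instead argue directly and without induction: you characterize exactly when the deletion $w^{(i)}$ is a palindrome via left, right and crossing pairs, dispose of the case $w=w^R$ through the symmetry $(w^{(n+1-i)})^R=w^{(i)}$ and the central block, and otherwise locate the first mismatched pair $(\ell,n+1-\ell)$ and show every palindromic deletion collapses onto $w^{(\ell)}$ or $w^{(n+1-\ell)}$, with a parity argument on the odd-length core ruling out both when $n$ is odd. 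I checked the index bookkeeping in your criterion (b) and in the chain $w_i=w_{i+1}=\cdots=w_\ell$; it is sound, including the boundary index $a=\ell$. Your derivation of the constraint on $q$ (via $q\le|Alph(p)|+1$ together with Remark \ref{oo}) is cleaner than, though equivalent to, the paper's count of repeated letters, and since the statement only asserts necessity you are right that no extremal examples are required (the paper supplies them anyway). A bonus of your approach: your palindrome case gives $SP_{n-1}(w)\le 1$ whenever $w=w^R$, which is exactly Lemma \ref{k} of the paper, obtained there by a separate argument; what the paper's inductive setup buys in exchange is uniformity with the style of the surrounding results rather than any logical economy.
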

\begin{proof}
Let $w=w_1\cdots w_n$ be a word. We have two cases:
\begin{enumerate}
    \item [Case 1\;:\;] 
    $n$ is odd~:~
     We show that the number of scattered palindromic subwords of length $n-1$ in $w$ is at most $1$. Let $p_{1}$ and $p_{2}$ be two  scattered palindromic subwords of length $n-1$ in a word $w=[w_i]$ of length $n$. 
Let  $p_1=w_1w_2\cdots w_{i_1-1}w_{i_1+1}\cdots w_n$ and  $p_2=w_1w_2\cdots w_{i_2-1}w_{i_2+1}\cdots w_n$  be such that $p_1$ and $p_2$ do not contain the letters at the position $i_1$ and $i_2$, respectively, from $w$.  We prove that $p_1=p_2$ by induction on $n$.   Let the result be true for the words of length less than $n$. Let $w$ be a word of length $n$.
 Without loss of generality, we assume that  $i_1< i_2$.   We remove the prefix and the suffix of $w$ of length $min\{i_1-1, n-i_2\}$ which is the same in both $p_1$ and $p_2$ from $w$ to get a word $w'=[w'_i]$. Note that $|w'|\leq |w|$ and an even length is removed from $w$ to obtain $w'$, so the length of $w'$ is odd. We have the following cases:
 \begin{itemize}
     \item If $|w'|<|w|$, then by induction, $p_1=p_2$, and we are done.
   
     \item If  $|w'|=|w|$, then $i_1=1$ or $i_2=n$. Without loss of generality, assume $i_1=1$. Let  $p_1=uu^R$ and $p_2=v_1v_1^R$ be palindromes of even length.  Here, $w$ is of the form $xuu^R=vyv'$ where $vv'=v_1v_1^R$ and  $u,\; v_1\in \Sigma^+$. Note that $n=2|u|+1$. Let $|u|=k$. We prove by induction on $k$ that if $xuu^R=vyv'$ where $vv'=v_1v_1^R$, then $x=y$, and  $u=v_1=x^k$. The base case of $|u|=0$ is trivial.
 Let the result be true for $|u|<k$. Let $|u|=k$, then $n=2k+1$.  We have,  $xuu^R=vyv'$ where $vv'=v_1v_1^R$ and  $u,\; v_1\in \Sigma^+$. If $v'=\lambda$, then $xuu^R=v_1v_1^Ry$, and therefore, $u=yu_1$ and 
 $v_1=xv_1'$ for $u_1,\;v_1'\in \Sigma^*$. Here, $xyu_1u_1^Ry=xv_1'{v_1'}^Rxy$, and thus, $yu_1u_1^R=v_1'{v_1'}^Rx$. By induction, $u_1=v_1'=x^{k-1}$ and $x=y$. Thus, $u=v_1=x^k$. If 
 $v'\neq \lambda$,  then let $v=xv_2$ and $v'=v_2'x$ for $ v_2,v_2'\in \Sigma^*$. Now, $xuu^R=xv_2yv_2'x$ where $v_2v_2'=v_3v_3^R$ for  $v_3\in \Sigma^*$. We get, $u=xu_1$ for $u_1\in \Sigma^*$, and thus, $xxu_1u_1^Rx=xv_2yv_2'x$. Here, $xu_1u_1^R=v_2yv_2'$ where $v_2v_2'=v_3{v_3}^R$. By induction, we get,
 $x=y$ and $u_1=v_3=x^{k-1}$. Thus, we have, $u =v_1=x^k$. So, $p_1=p_2=x^{2k}$.
  \end{itemize}
Hence,  $p_1=p_2$ in both the cases. So, at most one distinct scattered palindromic subword of length $n-1$ is possible when $n$ is odd.
  \item [Case 2\;:\;] $n$ is even~:~  We show that the number of scattered palindromic subwords of length $n-1$ is at most $2$. Let $p_{1},\; p_{2}$, and $p_{3}$ be  three scattered palindromic subwords of length $n-1$ in a word $w=[w_i]$ of length $n$.
Let $p_1$, $p_2$, and $p_3$ be such that they do not contain the letters at the position $i_1$, $i_2$ and $i_3$, respectively, from $w$, i.e., $p_1=w_1w_2\cdots w_{i_1-1}w_{i_1+1}\cdots w_n$,  $p_2=w_1w_2\cdots w_{i_2-1}w_{i_2+1}\cdots w_n$ and $p_3=w_1w_2\cdots w_{i_3-1}w_{i_3+1}\cdots w_n$.  
 We show that  $p_1=p_2$, $p_2=p_3$ or $p_1=p_3$   by induction on $n$.

 For $n=4$, let $w=w_1w_2w_3w_4$ and  $p_1, p_2$, $p_3$ be palindromic subwords of $w$. 
 If $p_1=w_1w_2w_3$, then we have the following possibilities:
\begin{itemize}
   \item  $p_2=w_1w_3w_4$ and $p_3=w_1w_2w_4$, then as $p_1,p_2$ are palindromes, $w_3=w_4=w_1$, which implies that $p_1=p_3$.
   \item $p_2=w_2w_3w_4$ and $p_3=w_1w_2w_4$, then as $p_1,p_3$ are palindromes, $w_1=w_4=w_3$, which implies that $p_1=p_3$.
     \item $p_2=w_1w_3w_4$ and  $p_3=w_2w_3w_4$,  then as $p_2,p_3$ are palindromes, $w_1=w_4=w_2$, which implies that $p_2=p_3.$
     \end{itemize}
     The other remaining case is when  $p_1=w_2w_3w_4$, $p_2=w_1w_3w_4$ and $p_3=w_1w_2w_4$, then since $p_1,p_2$ are palindromes, $w_1=w_4=w_2$, which implies that $p_1=p_2$.  Hence, the base case is true. Let the result be true for all the words of length less than $n$. Let $w$ be a word of length $n$.  Without loss of generality, we assume that  $i_1< i_2<i_3$.  We remove the prefix and the suffix of $w$ of length $min\{i_1-1, n-i_3\}$ which is the same  in all $p_i$'s from $w$ to get  word $w'=[w'_i]$. Note that $|w'|\leq |w|$ and as an even length is removed from $w$ to obtain $w'$, the length of $w'$ is even. We have the following cases: 
 \begin{itemize}
     \item  If $|w'|<|w|$, then by induction, $p_1=p_2$, $p_2=p_3$ or $p_1=p_3$, and we are done.
     \item If  $|w'|=|w|$, then $i_1=1$ or $i_3=n$.  Without loss of generality, assume $i_1=1$. Now, $w'=xp_1=vyv'$ where $vv'=p_3$ and $p_1$ and $p_3$ are palindromes and $x,y\in \Sigma$.
Since $p_2$ is a palindrome of length $n-1$ in $w$ and $i_2\neq i_1 \neq i_3$, we get $w_n=x$. 
 We prove that if $p_1$ and $p_3$ are palindromes and $xp_1=vyv'$ where $vv'=p_3$, then $p_1=p_3$ by induction on the length of $p_1$.
 If $|p_1|=1$, then $w=xx$ and $p_1=p_3=x$. Let the result hold for $|p_1|<k$. Assume $|p_1|=k$ and $xp_1=vyv'$ where $vv'=p_3$.
   If $v'=\lambda$, then    $xp_1=p_3x$,  we get, $p_1=xp_1'x$ and $p_3=xp_3'x$, where $p_1',\; p_3'\in \Sigma^+$, are palindromes. This implies, $xxp_1'x=xp_3'xx$, i.e., $xp_1'=p_3'x$.  By induction,  $p_1'=p_3'$. Hence, $p_1=p_3=xp_1'x$.  If $v'\neq \lambda$, then
   $xp_1=vyv'$ where $vv'=p_3$. Since $w_n=x$, we have,  $xxp_1'x=xv_1yv_1'x$ where $v_1v_1'=p_3'$
   and $p_1',\; p_3'\in \Sigma^+$ are palindromes. Here, $xp_1'=v_1yv_1'$ and $v_1v_1'=p_3'$. By induction, we get, $p_1'=p_3'$. So, $p_1=p_3=xp_1'x$.
 \end{itemize}
Hence, by induction, we have, $p_1=p_2$, $p_2=p_3$ or $p_1=p_3$  in both the cases.  So, at most two distinct scattered palindromic subwords of length $n-1$ are possible when $n$ is even.

\end{enumerate}

We now find values of $q$ when equality can be achieved. Let $a_i\in \Sigma$ be distinct. We have the following cases.
\begin{itemize}
 \item For $n$ odd, if a word has a scattered palindromic subword of length $n-1$, then by Remark \ref{oo}, $1\leq q \leq  \frac{n-1}{2}+1=\ceil{\frac{n}{2}}$. The words $a^n$ and $a_1a_2\cdots a_{\frac{n-1}{2}}a_{\frac{n-1}{2}+1}a_{\frac{n-1}{2}}\cdots a_1$ are examples for $q=1$ and $q= \ceil{\frac{n}{2}}$, respectively.
    \item For $n$ even, if a word has $2$ scattered palindromic subwords of length $n-1$, then $q\neq 1$, and at least $\frac{n-2}{2}$ letters should repeat in that word, which implies, $q\leq \frac{n-2}{2}+2=\frac{n+2}{2}$. Hence, $2\leq q\leq \frac{n+2}{2}$. The words $aa(ba)^{\frac{n-2}{2}}$ and $a_1a_2\cdots a_{\frac{n-2}{2}}a_{\frac{n-2}{2}+1}a_{\frac{n-2}{2}+2}a_{\frac{n-2}{2}}\cdots a_1$ are examples for $q=2$ and $q= \frac{n+2}{2}$, respectively.
\end{itemize}
\end{proof}

We have the following result, the proof of which is similar to Case 2 (when $n$ is even) of Proposition \ref{l11} and is omitted.
\begin{lemma}\label{k}
Let $w$ be a word of length $n$ such that $SP_{n-1}(w)= 2$. Then, $w$ is not a palindrome.
\end{lemma}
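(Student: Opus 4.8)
\textbf{Proof proposal for Lemma \ref{k}.}

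The plan is to argue by contradiction: suppose $w$ is a palindrome of length $n$ with $SP_{n-1}(w)=2$. By Lemma \ref{hh} (first bullet), $SP_{n-1}(w)\neq 0$ forces $SP_t(w)\neq 0$ for all $t\le n-1$, and in particular a scattered palindromic subword of length $n-1$ exists; since such a subword has even length $n-1$ only when... wait — in fact $SP_{n-1}(w)=2$ already forces $n$ to be even by Proposition \ref{l11} (when $n$ is odd the bound is $1$). So I may assume $n$ is even and $n-1$ is odd. Let $p_1$ and $p_2$ be the two distinct scattered palindromic subwords of length $n-1$, obtained by deleting the letters at positions $i_1<i_2$ of $w$. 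The goal is to show that if additionally $w=w^R$, a third such subword appears (or $p_1=p_2$), contradicting the count.

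The main step mirrors Case 2 of the proof of Proposition \ref{l11}. Since $w=w^R$, deleting the letter at position $i_1$ and deleting the letter at the mirror position $n+1-i_1$ both yield scattered palindromic subwords of length $n-1$ (the reversal map carries one to the other, and each is a palindrome since it is the reversal of the other and... I need to check this is genuinely a palindrome). Concretely: if $q = w_1\cdots w_{i_1-1}w_{i_1+1}\cdots w_n$ is a palindrome, then because $w$ is a palindrome, deleting the mirror position $n+1-i_1$ produces $q^R$, which is again a palindrome equal to $q$. So the deletion positions come in mirror pairs giving the \emph{same} subword, unless $i_1 = n+1-i_1$, i.e. $i_1 = (n+1)/2$, which is impossible for $n$ even. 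Hence each palindromic length-$(n-1)$ subword of a palindrome $w$ arises from deleting a letter at position $i$ \emph{and} its mirror $n+1-i$, and these give identical subwords. This does not yet give a contradiction, so I would instead reuse the structural induction: strip the common prefix/suffix of $w$ of length $\min\{i_1-1,\,n-i_2\}$ to reduce either to a shorter word (where induction — on a statement asserting a palindrome has at most one length-$(n-1)$ scattered palindromic subword — closes the argument) or to the case $i_1=1$ or $i_2=n$; by the palindrome symmetry these two are equivalent, say $i_1=1$, and then the same computation as in Proposition \ref{l11} (writing $xp_1 = vyv'$ with $vv'=p_2$ and $p_1,p_2$ palindromes, induction on $|p_1|$) forces $p_1=p_2$, contradicting that there are two distinct ones.

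So the cleaner formulation I would actually write: prove by induction on $n$ the statement ``if $w$ is a palindrome of length $n$ then $SP_{n-1}(w)\le 1$,'' which immediately gives Lemma \ref{k}. The base cases $n\le 4$ are direct. For the inductive step, take two length-$(n-1)$ scattered palindromic subwords $p_1,p_2$ deleting positions $i_1<i_2$; strip the shared prefix/suffix of length $\min\{i_1-1, n-i_2\}$. If the stripped word $w'$ is shorter, it is still a palindrome (a prefix/suffix-removal of equal lengths from a palindrome is a palindrome) and induction applies. Otherwise $i_1=1$ or $i_2=n$; using $w=w^R$, reduce to $i_1=1$, and run the "$xp_1=vyv'$, $vv'=p_2$" induction on $|p_1|$ exactly as in Case 2 of Proposition \ref{l11}'s proof to conclude $p_1=p_2$. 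The main obstacle is the bookkeeping in the $|w'|=|w|$ case: one must verify that the palindromicity of $w$ is correctly used to collapse the two boundary subcases ($i_1=1$ versus $i_2=n$) into one, and that the nested induction on $|p_1|$ really does terminate with $p_1=p_2$ rather than merely $p_1=p_3$ as in the three-subword setting — here, because $w$ is itself a palindrome, the "diagonal" possibility is excluded and equality of the two candidates is forced. Everything else is the routine peeling argument already carried out in Proposition \ref{l11}.
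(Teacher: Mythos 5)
Your approach is the one the paper intends: the paper omits the proof of Lemma \ref{k}, saying only that it is similar to Case~2 (the $n$ even case) of Proposition \ref{l11}, and your reduction --- note that $SP_{n-1}(w)=2$ forces $n$ even, take two distinct length-$(n-1)$ scattered palindromic subwords with deletion positions $i_1<i_2$, strip the common prefix and suffix of length $\min\{i_1-1,n-i_2\}$ (the stripped word is again an even-length palindrome, so the outer induction applies), and otherwise reduce to $i_1=1$ --- is exactly that adaptation. The one point you flag but leave unresolved is in fact the crux, so it should be made explicit rather than deferred. The nested induction of Proposition \ref{l11} (``if $p_1,p_3$ are palindromes and $xp_1=vyv'$ with $vv'=p_3$, then $p_1=p_3$'') is \emph{false} without an extra hypothesis: $w=abab$ satisfies $xp_1=vyv'$ with $p_1=bab$, $p_3=aba$ both palindromes, yet $p_1\neq p_3$. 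What makes that induction work in the paper is the auxiliary fact $w_1=w_n$, which there is extracted from the existence of the \emph{third} subword $p_2$ (``Since $p_2$ is a palindrome \ldots we get $w_n=x$''), and which is then used in both subcases ($v'=\lambda$ needs $y=w_n=x$; $v'\neq\lambda$ needs $p_1=xp_1'x$ and $p_3=xp_3'x$). In your two-subword setting there is no third subword to supply this, but $w=w^R$ gives $w_1=w_n$ for free, after which the nested induction runs verbatim and forces $p_1=p_2$, the desired contradiction. So the palindromicity of $w$ is used twice --- once to collapse the boundary case $i_2=n$ to $i_1=1$, as you say, and once to provide $w_1=w_n$ --- and with that second use stated, your proof is complete; the mirror-pair observation at the start is correct but, as you note yourself, not needed.
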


Thus, one can deduce the following from Proposition  \ref{l11} and Lemma \ref{k}.
\begin{corollary}\label{ii}
In a word $w$ of length $n$,
$SP_{n-1}(w)+SP_{n}(w)\leq 2$.
\end{corollary}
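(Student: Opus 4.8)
The plan is to deduce Corollary \ref{ii} purely by case analysis on the parity of $n$, using Proposition \ref{l11} together with Proposition \ref{l2}(3) and Lemma \ref{k}. First I would dispose of the odd case: when $n$ is odd, Proposition \ref{l11} gives $SP_{n-1}(w)\leq 1$ and Proposition \ref{l2}(3) gives $SP_n(w)\leq 1$, so the naive bound is $SP_{n-1}(w)+SP_n(w)\leq 2$ with nothing further to prove. The substance is entirely in the even case, where Proposition \ref{l11} only gives $SP_{n-1}(w)\leq 2$ and $SP_n(w)\leq 1$ separately, which would add up to $3$; so I must rule out the possibility that both maxima are attained simultaneously.

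For $n$ even I would argue by contradiction: suppose $SP_{n-1}(w)=2$ and $SP_n(w)=1$. The hypothesis $SP_n(w)=1$ means $w$ itself is a scattered palindromic subword of length $n$, i.e. $w$ is a palindrome. But Lemma \ref{k} states precisely that if $SP_{n-1}(w)=2$ then $w$ is not a palindrome — a direct contradiction. Hence when $n$ is even we cannot have both $SP_{n-1}(w)=2$ and $SP_n(w)=1$; at least one of them drops, giving $SP_{n-1}(w)+SP_n(w)\leq 2$. (One should also note the degenerate case $SP_{n-1}(w)=2$ forces $SP_n(w)\in\{0,1\}$ and the contradiction rules out $1$, while $SP_{n-1}(w)\leq 1$ paired with $SP_n(w)\leq 1$ is already fine.)

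I do not expect any serious obstacle here, since the corollary is a two-line consequence of results already proved; the only thing to be careful about is making the logical bookkeeping explicit — namely that $SP_n(w)=1$ is literally equivalent to "$w$ is a palindrome," so that Lemma \ref{k} applies verbatim. If anything is the crux, it is having stated Lemma \ref{k} in the sharp form needed (the non-palindromicity of words achieving $SP_{n-1}(w)=2$), which the excerpt has already done. I would write the proof in three short sentences: state the odd case, state the even-case contradiction via Lemma \ref{k}, and conclude.

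\begin{proof}
If $n$ is odd, then by Proposition \ref{l11} we have $SP_{n-1}(w)\leq 1$, and by Proposition \ref{l2}(3) we have $SP_{n}(w)\leq 1$, so $SP_{n-1}(w)+SP_{n}(w)\leq 2$. Now let $n$ be even. By Proposition \ref{l11}, $SP_{n-1}(w)\leq 2$, and by Proposition \ref{l2}(3), $SP_{n}(w)\leq 1$. If $SP_{n-1}(w)\leq 1$, then $SP_{n-1}(w)+SP_{n}(w)\leq 2$ and we are done. Otherwise $SP_{n-1}(w)=2$, and by Lemma \ref{k} the word $w$ is not a palindrome, so $w\notin \SPAL(w)$ and hence $SP_{n}(w)=0$; thus $SP_{n-1}(w)+SP_{n}(w)=2$. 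In all cases, $SP_{n-1}(w)+SP_{n}(w)\leq 2$.
\end{proof}
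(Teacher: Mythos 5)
Your proof is correct and is exactly the deduction the paper intends: the paper states the corollary follows "from Proposition \ref{l11} and Lemma \ref{k}" without writing out the details, and your case split on the parity of $n$, with Lemma \ref{k} ruling out $SP_{n-1}(w)=2$ and $SP_n(w)=1$ holding simultaneously, is that argument made explicit. No gaps.
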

Note that the above bound is tight as evident from the words of the form $a(ba)^i$ and $(ab)^i$, where $i\geq 1$, for odd and even length, respectively.

We calculate the maximum number of scattered palindromic subwords of length $n-2$ in a word of length $n$. We use the following results.
\begin{lemma}\label{h4}
Let $w=[w_i]$ be a word of length $n$ such that $w_1\neq w_n$. Then \[ SP_{n-2}(w)\leq  \left\{
\begin{array}{llll}
   2 , & & & if\; n \;is \;even,  \\
4, & & & if\; n\; is \;odd.
\end{array} 
\right. \]
\end{lemma}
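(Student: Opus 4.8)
\textbf{Proof proposal for Lemma \ref{h4}.}

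The plan is to mimic the structure of the even-length case of Proposition \ref{l11}, but now tracking \emph{pairs of deleted positions} instead of single deletions, and using the hypothesis $w_1 \neq w_n$ to pin down the first and last retained letters. Write $w = w_1 \cdots w_n$ and suppose $p_1, p_2, \dots$ are distinct scattered palindromic subwords of length $n-2$; each $p_j$ is obtained from $w$ by deleting two positions, say at indices forming a set $D_j = \{i,i'\} \subseteq \{1,\dots,n\}$. The claim is that the number of distinct such $p_j$ is at most $2$ when $n$ is even and at most $4$ when $n$ is odd. I would argue by strong induction on $n$, with small base cases ($n = 3,4,5$, say) checked by a short finite case analysis.

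For the inductive step, I would first observe that since $w_1 \neq w_n$, \emph{no} length-$(n-2)$ palindromic subword can retain both $w_1$ and $w_n$: such a subword would have equal first and last letters $w_1 = w_n$, a contradiction. Hence every $D_j$ contains the index $1$ or the index $n$ (possibly both). Partition the subwords into those whose deletion set contains $1$ and those whose deletion set contains $n$; by the reversal symmetry $w \mapsto w^R$ it suffices to bound each class. Within the class deleting position $1$: if such a subword also deletes position $n$, it is literally $w_2 \cdots w_{n-1}$, at most one word, and it is a palindrome only in a very restricted situation; otherwise it deletes position $1$ and some interior position, so it is a length-$(n-2)$ palindrome $p$ that is a subword of $w_2 \cdots w_n$ missing one interior letter. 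The trick is then to strip the common outer layers: if two such palindromes $p, p'$ agree on a long common prefix/suffix with $w$, delete that common block (of even length, to preserve parity and the palindrome property of the residual pieces) and apply the induction hypothesis to the shorter word — exactly the ``remove the prefix and suffix of length $\min\{\cdots\}$'' move used in Proposition \ref{l11}. When stripping is impossible because $|w'| = |w|$, one reduces, as in that proposition, to an identity of the form $x p_1 = v y v'$ with $v v' = p_3$ and $p_1, p_3$ palindromes, which forces the residual words to be powers of a single letter, collapsing the count.

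Combining: from the ``delete position $1$'' class one gets at most $2$ distinct palindromes (one of which, $w_2 \cdots w_{n-1}$, may or may not occur; the induction on the one-interior-deletion subcase supplies the bound), and symmetrically at most $2$ from the ``delete position $n$'' class — but when $n$ is even these two families overlap more tightly (a parity count on the deleted interior index, as in the proof of Proposition \ref{l11}, forces extra coincidences), giving the sharper bound $2$; when $n$ is odd no such extra collapse is forced and one is left with $4$. I would make the parity bookkeeping explicit: if $n$ is even and $p$ deletes position $1$ and an interior position $i$, then matching the palindrome condition pairs position $i$ with a position of the opposite parity in $n$, and two such deletions on the same side are forced to coincide; whereas for $n$ odd the central position is unpaired and admits one extra degree of freedom, doubling the count on each side before the $w_1 \neq w_n$ restriction cuts it back to $4$.

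The main obstacle I expect is the bookkeeping in the subcase $|w'| = |w|$ (equivalently $1 \in D_j$ for all $j$, or $n \in D_j$ for all $j$): there the clean induction-on-a-shorter-word step is unavailable, and one must run the auxiliary induction showing that an overlap identity $x u u^R = v y v'$ with $vv' = v_1 v_1^R$ forces $u = v_1 = x^{|u|}$ — the same delicate peeling argument as in Proposition \ref{l11}, but now one has to do it while simultaneously keeping track of the \emph{second} deleted position and confirming that in that degenerate situation all the $p_j$ genuinely collapse to at most the stated number. Getting the parity of the stripped block right (so that the residual halves remain palindromes and the induction hypothesis applies) is the one place where a careless step would break the argument.
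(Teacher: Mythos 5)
You have the paper's key observation exactly right: since $w_1\neq w_n$, no palindromic scattered subword of length $n-2$ can retain both $w_1$ and $w_n$, so every such subword lives inside $w_2\cdots w_n$ or inside $w_1\cdots w_{n-1}$. But at that point the paper simply notes that each of these is a word of length $n-1$, and that the subwords being counted have length $(n-1)-1$; Proposition \ref{l11} (applied with $n-1$ in place of $n$) then gives at most $1$ per class when $n-1$ is odd (i.e.\ $n$ even) and at most $2$ per class when $n-1$ is even (i.e.\ $n$ odd), hence $2$ and $4$ in total. The entire proof is a two-line reduction; no new induction, stripping argument, or overlap identity is needed.

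The genuine gap in your write-up is the final counting. You assert ``at most $2$ distinct palindromes'' from each class in \emph{both} parity cases and then, for $n$ even, appeal to unspecified ``extra coincidences'' between the two families, ``forced by a parity count,'' to cut $4$ down to $2$. That mechanism is not proven and is not the right one: the two families (delete position $1$ versus delete position $n$) need not interact at all. The correct reason the bound drops to $2$ for even $n$ is that \emph{each class separately} contains at most one element, because $w_2\cdots w_n$ and $w_1\cdots w_{n-1}$ then have odd length $n-1$ and the odd case of Proposition \ref{l11} applies. Your parity bookkeeping has the roles reversed (you treat even $n$ as the case needing a collapse, when it is the case where each class is already smaller), and as stated the argument does not close. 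If you replace the last two paragraphs of your proposal with the direct citation of Proposition \ref{l11} for the two length-$(n-1)$ factors, the proof is complete and coincides with the paper's.
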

\begin{proof}
Let $w=[w_i]$ be a word of length $n$. If $w_1\neq w_n$, then the only scattered palindromic subwords of length $n-2$ in $w$  are  scattered palindromic subwords of length  $n-2$ in words $w_1\cdots w_{n-1}$ and $w_2\cdots w_n$. By Proposition \ref{l11}, there are at most $2$ (resp. $4$) such scattered palindromic  subwords if $n$ is even (resp. odd).
\end{proof}
We now consider the words $w=[w_i]$ of length $n$ such that $w_1= w_n$.
\begin{lemma}\label{h6}
Let $w=[w_i]$ be a word of length $n$ such that $w_1= w_n$. Then  $SP_{n-2}(w)\leq  \ceil{\frac{n}{2}}$.
\end{lemma}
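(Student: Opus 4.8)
The plan is to induct on $n$, peeling off the matching outer letters $w_1 = w_n = x$. Write $w = x\,v\,x$ with $|v| = n-2$. A scattered palindromic subword $p$ of $w$ of length $n-2$ must, by a counting argument, "miss" exactly two positions of $w$; I would split into cases according to how many of these two missed positions are the two outer copies of $x$. If $p$ misses neither outer $x$, then $p$ has the form $x\,p'\,x$ with $p'$ a scattered palindromic subword of $v$ of length $n-4$, and $p' \prec v$; if $p$ misses exactly one outer $x$, then $p$ has $x$ at one end only, which is impossible for a palindrome unless $p$ also starts (or ends) with $x$ drawn from inside $v$; and if $p$ misses both outer copies of $x$, then $p = v$ itself, which contributes at most one subword. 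The first case is where the recursion happens: by the induction hypothesis applied to a suitable sub-word, such $p'$ number at most $\ceil{\frac{n-2}{2}} = \ceil{\frac{n}{2}} - 1$, and adding the single possible contribution $p = v$ from the last case gives the bound $\ceil{\frac{n}{2}}$.

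More carefully, I expect the argument to run as follows. First I would record that if $p$ is a scattered palindromic subword of $w = xvx$ of length $n-2$ and $p$ begins and ends with the outer $x$'s, then deleting those two $x$'s yields a palindrome $p' \prec v$ of length $n-4$; conversely every such $p'$ gives rise to at most one such $p$. Thus the number of length-$(n-2)$ scattered palindromic subwords of $w$ that use both outer letters is at most $SP_{n-4}(v)$ (with $v$ a word of length $n-2$, so this is at most $SP_{(n-2)-2}(v)$). The remaining length-$(n-2)$ scattered palindromic subwords of $w$ either equal $v$ (at most one, and only when $v$ is itself a palindrome) or are length-$(n-2)$ scattered palindromic subwords of $xv$ or of $vx$ not of the above type; a short case analysis using that $w_1 = w_n = x$ shows these last subwords coincide with length-$(n-2)$ scattered palindromic subwords already counted, or with $v$. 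Combining, $SP_{n-2}(w) \le SP_{(n-2)-2}(v) + 1$. Feeding in the inductive bound $SP_{(n-2)-2}(v) \le \ceil{\frac{n-2}{2}}$ — which requires knowing the bound for words of length $n-2$ whose first and last letters may or may not agree, so one actually inducts on the combined statement "$SP_{n-2}(u) \le \ceil{\frac{n}{2}}$ for every word $u$ of length $n$ with $u_1 = u_n$, and the sharper $SP_{n-2}(u) \le 4$ (resp.\ $2$) when $u_1 \ne u_n$" (the latter is Lemma \ref{h4}) — yields $SP_{n-2}(w) \le \ceil{\frac{n-2}{2}} + 1 = \ceil{\frac{n}{2}}$.

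The base cases $n = 2, 3, 4$ I would check by hand: for $n=2$, $w = xx$ has its unique length-$0$ "subword" handled trivially (or one starts the induction at $n=4$); for $n = 4$, $w = x w_2 w_3 x$, the length-$2$ scattered palindromic subwords are among $\{xx, w_2w_2\text{-type}, \dots\}$ and one verifies $SP_2(w) \le 2 = \ceil{\tfrac{4}{2}}$ directly. The main obstacle I anticipate is the bookkeeping in the case where $p$ uses exactly one of the two outer $x$'s: one must argue that since $p$ is a palindrome of length $n-2$ in $w = xvx$, if its first letter is the outer $x$ but its last letter is an internal $x$ (necessarily, since $p$ is a palindrome and begins with $x$), then $p$ is in fact also realizable using both outer $x$'s, or else $w$ has a rigid structure (many repeated $x$'s near the end) that one can exploit — this is exactly the flavor of the "$xuu^R = vyv'$" rigidity arguments in the proof of Proposition \ref{l11}, so I would model that sub-case on those computations rather than reproving them from scratch.
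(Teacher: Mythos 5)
Your proposal follows essentially the same route as the paper: write $w = xvx$ with $v = w_2\cdots w_{n-1}$, observe that every length-$(n-2)$ scattered palindromic subword of $w$ is either $v$ itself or of the form $xp'x$ with $p'$ a length-$(n-4)$ scattered palindromic subword of $v$ (your worried ``uses exactly one outer $x$'' case collapses into the latter, since the palindrome condition forces $w_2=x$ or $w_{n-1}=x$ and the same word is then realizable using both outer letters), and recurse to get $SP_{n-2}(w)\le \ceil{\frac{n-2}{2}}+1=\ceil{\frac{n}{2}}$. Your additional care is in fact warranted and goes beyond the paper's one-line ``by induction'': the paper silently applies the lemma to $v$, which need not satisfy $v_1=v_{|v|}$, so one really does need either the combined induction with Lemma \ref{h4} or your observation that the ``$+1$'' term occurs only when $v$ is itself a palindrome (hence only when $v_1=v_{|v|}$), which is exactly what closes the small odd cases.
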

\begin{proof}
We prove the result by induction on $n$. For $n\leq 3$, the claim is trivial. Assume the result to be true for all words of length less than $n$. Consider $w=[w_i]$ to be a word of length $n$ such that $w_1= w_n$.  If $w_1=w_n$, then the scattered palindromic subword of length $n-2$ in $w$ is either the word $w_2\cdots w_{n-1}$ or $w_1pw_1$, where $p$ is a scattered palindromic subword of length $n-4$ in the word $w_2\cdots w_{n-1}$. Thus, by induction, $SP_{n-2}(w)\leq\ceil{\frac{n-2}{2}}+1=\ceil{\frac{n}{2}}$.
\end{proof}
It can be easily verified by direct computation that $SP_{1} (w)\leq3$ for  $n=3$,  $SP_{2} (w)\leq2$ for  $n=4$ and $SP_{3} (w)\leq 4$ for  $n=5$. We deduce the following from  Lemmas \ref{h4} and \ref{h6} for $n\geq 6$.
\begin{proposition}\label{k1}
In a word $w$ of length $n$,  $SP_{n-2} (w)\leq  \ceil{\frac{n}{2}}$ for $n\geq 6$.
\end{proposition}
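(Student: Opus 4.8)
\textbf{Proof plan for Proposition \ref{k1}.} The plan is to combine the two preceding lemmas by splitting on whether the first and last letters of $w$ agree. Let $w=[w_i]$ be a word of length $n\geq 6$. If $w_1\neq w_n$, then Lemma \ref{h4} gives $SP_{n-2}(w)\leq 2$ when $n$ is even and $SP_{n-2}(w)\leq 4$ when $n$ is odd; in either case this is at most $\ceil{\frac{n}{2}}$ once $n\geq 6$ (indeed $\ceil{n/2}\geq 3$ for $n\geq 6$ even and $\ceil{n/2}\geq 4$ for $n\geq 7$ odd, and the only odd value left, $n=7$, already satisfies $4\leq 4$; note that $n=5$ is excluded, which is exactly why the proposition is stated for $n\geq 6$ and the small cases $n=3,4,5$ are handled separately by direct computation just before the statement). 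If instead $w_1=w_n$, then Lemma \ref{h6} directly yields $SP_{n-2}(w)\leq\ceil{\frac{n}{2}}$.

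First I would dispose of the $w_1\neq w_n$ case by invoking Lemma \ref{h4} and then checking the numerical inequality $\max\{2,4\}\leq\ceil{n/2}$ for the relevant parities once $n\geq 6$: for $n$ even this reads $2\leq n/2$, true for $n\geq 4$; for $n$ odd this reads $4\leq (n+1)/2$, true for $n\geq 7$, and the boundary odd case is vacuous here since $n\geq 6$ forces $n\geq 7$ when $n$ is odd. Then I would handle the $w_1=w_n$ case as an immediate citation of Lemma \ref{h6}. Taking the maximum over the two cases gives $SP_{n-2}(w)\leq\ceil{\frac{n}{2}}$ in all cases, completing the proof.

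Since both lemmas are already established in the excerpt, there is essentially no obstacle; the only thing to be careful about is the parity bookkeeping in the $w_1\neq w_n$ case, making sure the bound $4$ (for odd $n$) really is dominated by $\ceil{n/2}$ for every odd $n\geq 6$ — i.e. $n=7,9,11,\dots$ — which it is, and confirming that the excluded small cases $n=3,4,5$ are precisely the ones where this domination could fail ($\ceil{5/2}=3<4$), which is why they were treated by direct computation beforehand.
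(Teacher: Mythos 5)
Your proposal is correct and follows exactly the route the paper intends: the paper simply states that Proposition \ref{k1} is deduced from Lemmas \ref{h4} and \ref{h6}, which is precisely your case split on $w_1\neq w_n$ versus $w_1=w_n$ together with the parity check that $2$ (for $n$ even) and $4$ (for $n$ odd, hence $n\geq 7$) are both at most $\ceil{\frac{n}{2}}$ once $n\geq 6$. Your observation that $n=5$ is the genuine obstruction ($\ceil{5/2}=3<4$) correctly explains why the small cases are handled separately just before the statement.
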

We end this section by deducing an upper bound on the number of scattered palindromic subwords in a word using  Propositions \ref{l2}, \ref{i} and \ref{k1}, and Corollaries \ref{c1} and \ref{ii}.

\begin{proposition}\label{pp2}
The number of scattered palindromic subwords in a word of length $n\geq 7$, with alphabet size $q$ is bounded above by $$ n+\left\lfloor{\dfrac{n}{2}}\right\rfloor+2\left(\frac{n^2-2n}{4}\right)+\sum_{i=5}^{n-3}q^{\ceil{\frac{i}{2}}}+\left\lceil{\dfrac{n}{2}}\right\rceil+2=\frac{n^2+2n+4}{2}+\sum_{i=5}^{n-3}q^{\ceil{\frac{i}{2}}}.$$ 
\end{proposition}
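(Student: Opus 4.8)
The plan is to use the additivity $SP(w)=\sum_{t=1}^{n}SP_t(w)$ and to bound each summand by the sharpest estimate established earlier, then add everything up. Concretely, partition the range $1\le t\le n$ into the blocks $t=1$, $t=2$, $t=3$, $t=4$, $5\le t\le n-3$, $t=n-2$, and $t\in\{n-1,n\}$. The hypothesis $n\ge 7$ is exactly what is needed so that these blocks are pairwise disjoint (for $n=7$ the middle block $5\le t\le n-3$ is empty) and so that every auxiliary result applies in the form stated; for $n\le 6$ the block $\{n-2\}$ would collide with $\{3,4\}$, which is why the statement is restricted to $n\ge7$.

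For the individual blocks I would invoke, in order: Proposition \ref{l2}(1) to get $SP_1(w)\le n$; Proposition \ref{l2}(2) to get $SP_2(w)\le\floor{n/2}$; Proposition \ref{i} to bound $SP_3(w)$ by $\tfrac{n^2-2n}{4}$ when $n$ is even and by $\bigl(\tfrac{n-1}{2}\bigr)^2$ when $n$ is odd; Corollary \ref{c1} to bound $SP_4(w)$ by the same quantity; Remark \ref{r1} to get $SP_t(w)\le q^{\ceil{t/2}}$ for each $t$ with $5\le t\le n-3$; Proposition \ref{k1} to get $SP_{n-2}(w)\le\ceil{n/2}$ (valid since $n\ge6$); and Corollary \ref{ii} to get the joint bound $SP_{n-1}(w)+SP_n(w)\le 2$. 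Summing these contributions yields
\[
SP(w)\ \le\ n+\floor{n/2}+2\Bigl(\tfrac{n^2-2n}{4}\Bigr)+\sum_{i=5}^{n-3}q^{\ceil{i/2}}+\ceil{n/2}+2 ,
\]
and the closed form $\tfrac{n^2+2n+4}{2}+\sum_{i=5}^{n-3}q^{\ceil{i/2}}$ then follows by using $\floor{n/2}+\ceil{n/2}=n$ and $2\cdot\tfrac{n^2-2n}{4}=\tfrac{n^2-2n}{2}$, i.e.\ $n+n+\tfrac{n^2-2n}{2}+2=\tfrac{n^2+2n+4}{2}$.

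I do not expect a substantive obstacle: the argument is essentially an assembly of term-by-term estimates already proved. The only points demanding care are (i) confirming that the index blocks used for the different estimates do not overlap, which is precisely the source of the restriction $n\ge7$, and (ii) tracking the parity of $n$ in the $SP_3$ and $SP_4$ bounds, carrying out the final one-line simplification separately for $n$ even and $n$ odd (in the odd case replacing $2\cdot\tfrac{n^2-2n}{4}$ by $2\bigl(\tfrac{n-1}{2}\bigr)^2$ and checking the total still collapses to the stated bound). One might also add the remark, already implicit after Corollary \ref{c1} and Remark \ref{r1}, that this bound is far from tight, since the middle-range estimates $SP_t(w)\le q^{\ceil{t/2}}$ are themselves loose.
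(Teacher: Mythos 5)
Your proposal is exactly the paper's argument: the paper gives no explicit proof of Proposition \ref{pp2}, stating only that it is ``deduced'' from Propositions \ref{l2}, \ref{i} and \ref{k1} and Corollaries \ref{c1} and \ref{ii}, which is precisely the term-by-term assembly you describe, with the same index blocks and the same restriction $n\ge 7$ to keep them disjoint. One small caveat on your point (ii): for odd $n$ the sum does \emph{not} quite collapse to the stated bound, since $2\bigl(\tfrac{n-1}{2}\bigr)^2=\tfrac{n^2-2n+1}{2}$ exceeds $2\cdot\tfrac{n^2-2n}{4}$ by $\tfrac12$, so the term-by-term total is $\tfrac{n^2+2n+5}{2}$ rather than $\tfrac{n^2+2n+4}{2}$; this half-integer discrepancy is present in the paper's own statement (which writes $2\bigl(\tfrac{n^2-2n}{4}\bigr)$ without distinguishing parity), so it is a defect you have inherited rather than introduced, but you should not assert that the odd case ``still collapses to the stated bound'' without either adjusting the constant or arguing the discrepancy away.
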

However, for $q=2$ and $3$, the following bound is better for $n\geq 7$.
\begin{proposition}\label{pp1}
The number of scattered palindromic subwords in a word of length $n$ with alphabet size $q$ is bounded above by $\sum_{i=1}^{n-3} q^{\ceil{\frac{i}{2}}}+\ceil{\frac{n}{2}}+2$ for $n\geq 6$.
\end{proposition}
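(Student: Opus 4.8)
The plan is to decompose $SP(w)=\sum_{t=1}^{n}SP_t(w)$ according to the length $t$ of a scattered palindromic subword, and to estimate the contribution of the ``short'' lengths $1\le t\le n-3$ and of the three ``long'' lengths $t\in\{n-2,\,n-1,\,n\}$ by different means. Since $n\ge 6$, the index set $\{1,\dots,n-3\}$ is non-empty and disjoint from $\{n-2,n-1,n\}$, so this split is legitimate and $SP(w)=\left(\sum_{t=1}^{n-3}SP_t(w)\right)+SP_{n-2}(w)+SP_{n-1}(w)+SP_n(w)$.

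For the short lengths I would simply invoke Remark \ref{r1}, which gives $SP_t(w)\le q^{\ceil{\frac{t}{2}}}$ for every $t$ with $1\le t\le n$; summing over $t=1,\dots,n-3$ yields $\sum_{t=1}^{n-3}SP_t(w)\le\sum_{i=1}^{n-3}q^{\ceil{\frac{i}{2}}}$. For the long lengths I would use the sharper estimates already proved: Proposition \ref{k1} gives $SP_{n-2}(w)\le\ceil{\frac{n}{2}}$ (this is exactly where the hypothesis $n\ge 6$ is needed), and Corollary \ref{ii} gives $SP_{n-1}(w)+SP_n(w)\le 2$. Adding these three contributions produces the asserted bound $\sum_{i=1}^{n-3}q^{\ceil{\frac{i}{2}}}+\ceil{\frac{n}{2}}+2$.

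There is essentially no obstacle here beyond bookkeeping, since every ingredient --- Remark \ref{r1}, Proposition \ref{k1}, Corollary \ref{ii} --- is already available; the only points requiring care are the range of validity of Proposition \ref{k1}, which matches the hypothesis $n\ge 6$ precisely, and the check that the tail sum and the top three terms index disjoint lengths. Finally, to justify the accompanying remark that this bound improves Proposition \ref{pp2} for $q=2$ and $q=3$, one compares term by term: Proposition \ref{pp2} replaces the slowly growing quantities $q^{\ceil{\frac{i}{2}}}$ for the small indices $i$ by the quadratic terms $n$, $\lfloor n/2\rfloor$ and $2\cdot\frac{n^2-2n}{4}$, which for small alphabet size is wasteful; so retaining the cruder per-length estimate $q^{\ceil{\frac{i}{2}}}$ across the whole range $i\le n-3$ is cheaper when $q\le 3$.
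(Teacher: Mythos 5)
Your proof is correct and is exactly the argument the paper intends (the paper states Proposition \ref{pp1} without writing out a proof, but the decomposition $SP(w)=\sum_{t=1}^{n-3}SP_t(w)+SP_{n-2}(w)+SP_{n-1}(w)+SP_n(w)$, bounded via Remark \ref{r1}, Proposition \ref{k1}, and Corollary \ref{ii}, is precisely the route suggested by the surrounding text). Your identification of $n\geq 6$ as the hypothesis inherited from Proposition \ref{k1} is also the right reading.
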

We now give a comparison of bounds in Propositions \ref{pp2} and \ref{pp1}, and Theorem \ref{p12}. Note that the bound in Theorem \ref{p12} is far better than the bound achieved in Propositions \ref{pp2} and \ref{pp1} for $q\geq 4$,  because of the fact that $F_{n}<{ (\frac{7}{4}})^n$ for $n\geq 1$ (\cite{bookf}). 
\begin{table}[h]
\begin{center}
\begin{tabular}{|c|c|c|c|c|c|c|c|}
 \hline 
 \multirow{2}{*}{}&\multicolumn{2}{c}{(Proposition \ref{pp2})}&&\multicolumn{2}{c}{(Proposition \ref{pp1})}&&\multirow{2}{*}{(Theorem \ref{p12}) }\\
 \multirow{2}{*}{$n$}&\multicolumn{2}{c}{$\frac{n^2+2n+4}{2}+\sum_{i=5}^{n-3}q^{\ceil{\frac{i}{2}}}$}&&\multicolumn{2}{c}{$\sum_{i=1}^{n-3} q^{\ceil{\frac{i}{2}}}+\ceil{\frac{n}{2}}+2$}&&\multirow{2}{*}{$F_{n+2}-1$}\\\cline{2-7}
 &$q=2$&$q=3$&$q=4$&$q=2$&$q=3$&$q=4$&\\\hline
    $7$&$33$& $33$& $33$&     $18$&$30$&$46$&$33$\\
    \hline
   $8$&$50$&69&106&$26$&$57$&$110$&$54$ \\
   \hline
  $9$&$67$&105&179&35&$85$&$175$&$88$ \\
  \hline
   $10$&94&197&446&$51$&$166$&$431$&$143$ \\
  \hline
\end{tabular}
\end{center}
    \caption{Comparison between bounds}
    \label{Table 4}
\end{table}
In the next section, we present a tight bound on the number of scattered palindromic subwords in a word of length $n$ with $q$ distinct letters for some values of $q$.

\section{Tight bound on the number of scattered palindromic subwords}\label{sec-tight}
In this section, we provide the exact values for  the maximum number of scattered palindromic subwords in a word of length $n$ with $q$ distinct letters, denoted by  $SP^{n,q}$, for $q\geq \frac{n}{2}$.  Based on some numerical values, we also conjecture the maximum value of $SP^{n,q}$ over all $q$.\\

We first prove the following result.
\begin{proposition}\label{j1}
Let $w$ be a word of length $2q$ with $q$ distinct letters $a_i$  such that $|w|_{a_i}=2$, for all $a_i \in Alph(w)$. Then, $SP(w)\leq 2^{q+1}-2$.
\end{proposition}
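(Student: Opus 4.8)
The plan is to induct on $q$, building a word of length $2q$ from a word of length $2(q-1)$ by inserting the two occurrences of a new letter and carefully accounting for how many new scattered palindromic subwords this can create. The base case $q=1$ gives $w=a_1a_1$ with $\SPAL(w)=\{a_1,a_1a_1\}$, so $SP(w)=2=2^{2}-2$, as required. For the inductive step, let $w$ have length $2q$ with each letter appearing exactly twice. Pick the letter — call it $x$ — whose two occurrences are \emph{closest together}, and write $w=\alpha x \gamma x \beta$ where $\gamma$ is the (possibly empty) factor strictly between the two $x$'s. By the choice of $x$, every letter occurring in $\gamma$ has both its occurrences inside $\gamma$ (otherwise some other letter would have a closer-together pair), so $|\gamma|$ is even, say $|\gamma|=2k$ with $0\le k\le q-1$, and $\gamma$ itself is a word in which $k$ distinct letters each appear twice. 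Deleting the two $x$'s leaves $w'=\alpha\gamma\beta$, a word of length $2(q-1)$ in which the remaining $q-1$ letters each appear twice, so the inductive hypothesis gives $SP(w')\le 2^{q}-2$.

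Next I would bound the number of scattered palindromic subwords of $w$ that genuinely involve the letter $x$ — i.e. those \emph{not} already counted in $SP(w')$. Such a subword, being a palindrome, either contains $x$ an even number of times (in which case it is of the form $x u x$ with $u$ a scattered palindromic subword, or $xx$) or is $xpx^{\,?}$; more precisely, any new palindrome using $x$ has the shape $x\,p\,x$ where $p\in\SPAL(w')\cup\{\lambda\}$ and, crucially, $p$ is realizable strictly between the first and last occurrence of $x$. But the region strictly between the two $x$'s available for the ``middle'' of such a palindrome is exactly $\gamma$ (the occurrences of $x$ are consecutive in the sense that nothing of $x$ lies between them), so $p$ must be a scattered palindromic subword of $\gamma$, or empty. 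Hence the number of new palindromes is at most $SP(\gamma)+1$. Since $\gamma$ has length $2k$ with each of its $k$ letters appearing twice, the inductive hypothesis (applied to $\gamma$, or trivially if $k=0$) gives $SP(\gamma)\le 2^{k+1}-2$, so the number of new palindromes is at most $2^{k+1}-1$.

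Combining, $SP(w)\le SP(w') + 2^{k+1}-1 \le (2^{q}-2) + (2^{k+1}-1) = 2^{q}+2^{k+1}-3$. Since $k\le q-1$ we have $2^{k+1}\le 2^{q}$, giving $SP(w)\le 2^{q+1}-3 < 2^{q+1}-2$, which is even stronger than needed; and in the worst case this confirms the claimed bound $SP(w)\le 2^{q+1}-2$. The main obstacle I anticipate is the middle step: justifying rigorously that every new $x$-involving palindrome has its ``core'' $p$ drawn from $\SPAL(\gamma)\cup\{\lambda\}$ rather than from all of $\SPAL(w')$. This requires the observation that a palindrome $xpx$ scattered inside $w=\alpha x\gamma x\beta$ must pick its two $x$'s as the flanking letters (the two $x$'s are the only $x$'s available) and therefore $p$ sits entirely inside $\gamma$; one must also handle palindromes that use both $x$'s not as the outermost letters, but symmetry of a palindrome forces the two copies of $x$ to occupy mirror-image positions, and since there are only two of them they are either both at the ends or both adjacent to the center — either way $p$ (or the relevant sub-block) lies in $\gamma$. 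Once that localization is nailed down, the arithmetic is immediate.
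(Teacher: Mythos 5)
Your decomposition $w=\alpha x\gamma x\beta$, with $x$ the letter whose two occurrences are closest together (so that $\gamma$ has even length and is itself a word in which every letter occurs exactly twice), is an appealing idea, but the central counting step is wrong: the elements of $\SPAL(w)\setminus\SPAL(w')$ are \emph{not} all of the form $xpx$ with $p$ realizable between the two $x$'s. A new palindrome may contain only one occurrence of $x$, necessarily at its centre, and then has the form $uxu^R$ with $u$ taken from $\alpha\gamma$ and $u^R$ from $\gamma\beta$; and a new palindrome containing both $x$'s may have them at \emph{arbitrary} mirror positions, i.e.\ be of the form $uxvxu^R$ with $u$ non-empty (your assertion that two mirror-placed occurrences must sit either at the ends or adjacent to the centre is false). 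None of these families is controlled by $SP(\gamma)+1$. Concretely, take $w=abba$ ($q=2$): your chosen letter is $x=b$, $\gamma=\lambda$, and your bound allows at most $SP(\gamma)+1=1$ new palindrome, whereas the new palindromes involving $b$ are $b$, $bb$, $aba$ and $abba$ --- four of them (your count even omits the singleton $x$ itself, since $xpx$ with $p=\lambda$ only yields $xx$).

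The failure is also visible from your conclusion: you derive $SP(w)\le 2^{q}+2^{k+1}-3\le 2^{q+1}-3$, which is strictly below $2^{q+1}-2$. But the bound $2^{q+1}-2$ is attained, e.g.\ by $a_1a_2\cdots a_qa_qa_{q-1}\cdots a_1$ (already $SP(abba)=6=2^{3}-2>2^{3}-3$), so any argument producing $2^{q+1}-3$ must be unsound. The paper's proof avoids this trap by stratifying the removed palindromes \emph{by length} rather than by their middle block: since every letter occurs at most twice in $w$, the $k-1$ letters on one side of the central $a_q$ in a removed palindrome of length $2k-1$ or $2k$ are pairwise distinct and chosen from the other $q-1$ letters, which yields $n_{2k-1}+n_{2k}\le 2\binom{q-1}{k-1}$ and hence at most $2^{q}$ removed palindromes in total. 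To salvage your approach you would need an analogous length-stratified bound on the palindromes $uxvxu^R$ and $uxu^R$ --- counting only their middles $v\prec\gamma$ cannot work.
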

\begin{proof}
We prove the result by induction on $q$. The base case for $q=1$ trivially holds. 
Assume the result to  be true for all words  of length $2(q-1)$.  Let $w$ be a word of length $2q$ with $q$ distinct letters $a_i$  such that $|w|_{a_i}=2$, for all $i$, where $1\leq i \leq q$. Assume that $a_i$ is the $i^{th}$ distinct letter in $w$. Remove the two occurrences of $a_q$ from $w$. We are left with a word $w'$ of length $2(q-1)$ with $q-1$ distinct letters $a_i$ such that $|w|_{a_i}=2$ for $1\leq i\leq q-1$. By induction, $SP(w')\leq 2^{q}-2$. We now count the number $n_l$ of scattered palindromic subwords of length $l$ removed on removing the two occurrences of $a_q$ from $w$. Note that $SP(w)=SP(w')+\sum_{i=1}^{2q}n_i $. We have $n_1=1$ and $n_2=1$. Now, the scattered  palindromic subwords of length $3$   removed on removing the two occurrences of $a_q$ are of the form either $a_ia_qa_i$   or $a_qa_ia_q$  and  the scattered  palindromic subwords of length $4$   removed on removing the two occurrences of $a_q$ are of the form  $a_ia_qa_qa_i$ for $i\neq q$. If $a_qa_ia_q$ exists, then  $a_ia_qa_qa_i$ cannot exist. Hence, $n_3+n_4\leq 2{q-1\choose 1}$. Note that the scattered  palindromic subwords of length $2k-1$ and $2k$, for $3\leq k\leq q$, removed on removing the two occurrences of $a_q$ will have $a_q$ in their $k^{th}$ position. Now, we choose  $k-1$ distinct letters  not equal to $a_q$ from the remaining $q-1$ 
distinct letters in  $w$ in ${q-1\choose k-1}$ ways. Thus, $n_{2k-1}+n_{2k} \leq 2{q-1\choose k-1}$ for $3\leq k\leq q$.  We have, $$\sum_{i=1}^{2q}n_l\leq 1+1+2 \sum_{k=1}^{q-1}{q-1\choose k} =2+2(2^{q-1}-1)=2^{q}. $$ Hence, $SP(w)\leq 2^{q}-2+2^{q}=2^{q+1}-2$.
\end{proof}

Note that the bound in Proposition \ref{j1} is tight as evident from the word $w= a_1a_2\cdots a_qa_qa_{q-1}\cdots a_1$. There are $\binom{q}{i}$ scattered palindromic subwords of length $2i-1$ for $1\leq i\leq q$, and $\binom{q}{i}$ scattered palindromic subwords of length $2i$ for $1\leq i\leq q$. The total number of scattered palindromic subwords in $w$ is 
    $$ 2\sum_{i=1}^{q} {\binom{q}{i}}=2(2^{q}-1)= 2^{q+1}-2$$

 We  have the following result for the word $w = v_1av_2av_3av_4$ such that $|v_1v_2v_3v_4|_a=0$ and $|w|_{b}\leq 2$  for $b\neq a$ and  $a,\;b \in \Sigma$.
\begin{lemma}\label{zzz}
Let $w = v_1av_2av_3av_4$ and $w'= v_1v_2av_3v_4$ such that $|v_1v_2v_3v_4|_a=0$ and $|w|_{b}\leq 2$ for all $b\neq a$ and  $a,\;b \in \Sigma$. Then, $SP(w)\leq 2SP(w')+1.$
\end{lemma}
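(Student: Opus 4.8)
The plan is to partition $\SPAL(w)$ according to how a scattered palindromic subword interacts with the three distinguished occurrences of $a$ in $w = v_1av_2av_3av_4$, and to show that each palindrome in $w$ either "uses $a$ through the middle occurrence only" or, after contracting the three $a$'s down to one, corresponds to a palindrome of $w' = v_1v_2av_3v_4$ — with a bounded amount of overcounting. Concretely, I would first observe that since $|w|_b \le 2$ for every $b \ne a$, any scattered palindromic subword $p$ of $w$ can contain $a$ in at most... well, $|w|_a = 3$, so $p$ can use one, two, or three copies of $a$. The key structural fact to extract is: if $p$ uses the middle occurrence of $a$ (the one between $v_2$ and $v_3$) as its central letter, or if $p$ is symmetric about a pair of $a$'s straddling that middle position, then $p$ maps naturally to a palindrome of $w'$; the copies of $a$ that $p$ does NOT use can be re-interpreted through the single surviving $a$ in $w'$.

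The main steps, in order, would be: (1) Set up the map $\Phi \colon \SPAL(w) \dashrightarrow \SPAL(w') \cup \{\text{exceptional set}\}$ by deleting from each palindrome $p$ the letters drawn from the outer two occurrences of $a$, keeping the letters drawn from $v_1v_2v_3v_4$ and the middle $a$ in place; the image of the $v_i$-letters of $p$ is a subword of $v_1v_2av_3v_4 = w'$, and one checks it remains a palindrome (here symmetry of $p$ forces the $v_i$-part to be symmetric once the $a$'s are removed, since the $a$'s occupy symmetric positions in $p$). (2) Bound the fibers of $\Phi$: given a palindrome $q \prec w'$, how many $p \prec w$ can map to it? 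A palindrome $q$ of $w'$ either uses the surviving $a$ or not; in either case, the preimages under $\Phi$ are obtained by optionally inserting a matched pair $a \cdots a$ around $q$ using the outer occurrences — this is at most two choices (insert or not), but one of them may fail to be a subword of $w$ or may coincide with $q$ itself, so the fiber has size at most $2$, giving $SP(w) \le 2\,SP(w') + (\text{correction})$. (3) Pin down the correction term to exactly $+1$: the only palindrome of $w$ with no $v_i$-letters and hence no image in $\SPAL(w')$ is... actually $a$, $aa$, $aaa$ all reduce; the genuinely unaccounted palindrome is $aaa$ (which collapses to $a$, already counted as the image of $a \prec w'$), so one must argue the double-counting is off by exactly one, typically by exhibiting that exactly one palindrome of $w'$ (namely the single letter $a$, or $\lambda$) has a fiber of size one less than the generic bound, or that one $p$ is lost. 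I would nail this by a careful count on the $a$-only palindromes $\{a, aa, aaa\}$ versus $\{a, aa\}$-type in $w'$.

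The hard part will be step (2): controlling the fibers of $\Phi$ requires showing that reinserting $a$'s into a palindrome $q$ of $w'$ to recover a subword of $w$ really does give at most two distinct scattered palindromic subwords of $w$, and this is delicate because $q$ itself may or may not already use the middle $a$, and the outer $a$'s of $w$ sit at specific positions relative to $v_1$ and $v_4$ — so "inserting a matched pair" is not always possible on both sides symmetrically. I anticipate needing a short case split on whether $q$'s central letter is $a$ or whether $q$ has even length with its center falling between $v_2$ and $v_3$ after contraction; in the worst case the fiber is $\{q,\ aqa\}$ of size two, and in the degenerate cases (e.g. $q = \lambda$ or $q = a$) it is smaller, which is precisely where the $+1$ rather than $+0$ comes from. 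Once the fiber bound and the degenerate-case accounting are in hand, summing $|\Phi^{-1}(q)|$ over $q \in \SPAL(w') \cup \{\lambda\}$ yields $SP(w) \le 2(SP(w')+1) - 1 = 2\,SP(w') + 1$, as claimed.
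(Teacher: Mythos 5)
Your overall strategy --- send each scattered palindromic subword of $w$ to a palindrome of $w'$ by deleting occurrences of $a$, bound the fibers by $2$, and account separately for the powers of $a$ --- is the same as the paper's. But two steps do not work as you have set them up. First, the map $\Phi$ you define in step (1), ``delete the letters drawn from the outer two occurrences of $a$'', does not always produce a palindrome. If $p=u_1au_2au_1^R$ is embedded in $w$ so that its matched pair of $a$'s sits at the first and the \emph{middle} occurrence of $a$, deleting only the letter at the first occurrence yields $u_1u_2au_1^R$, which is in general not a palindrome; your parenthetical claim that ``the $a$'s occupy symmetric positions in $p$'' concerns positions inside $p$, not which occurrences in $w$ they came from, so it does not rescue the definition. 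The paper avoids this by deleting a set of $a$'s that is symmetric \emph{inside} $p$: the outermost matched pair when $|p|_a\ge 2$, and the central $a$ when $|p|_a=1$; the residue is then always a palindrome and one checks it is a scattered subword of $w'$.

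Second, and more seriously, your fiber bound in step (2) is asserted rather than proved, and the candidate set you consider is too small. The preimages of a palindrome $q\prec w'$ are not just $q$ and $aqa$: a matched pair of $a$'s can be inserted at \emph{any} symmetric depth, so the candidates are $u_1au_2au_1^R$ for every factorization $q=u_1u_2u_1^R$ with $u_2$ a palindrome, together with $u_3au_3^R$ when $q=u_3u_3^R$ has even length. For instance, over $q=bb$ the candidates are $abba$, $baab$ and $bab$. Showing that at most one of these can lie in $\SPAL(w)\setminus\SPAL(w')$ is the entire content of the lemma, and it is exactly here that the hypotheses $|w|_a=3$ and $|w|_b\le 2$ for $b\ne a$ must enter; your proposal never invokes the second hypothesis at all. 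The paper's case analysis is devoted precisely to this point: it rules out the coexistence of $u_1au_2au_1^R$ and $u_3au_3^R$ above the same $q$ by tracking which of the three occurrences $a_i,a_j,a_k$ each must use and deriving contradictions. Until that argument is supplied, $SP(w)\le 2SP(w')+1$ does not follow. The $+1$ bookkeeping, by contrast, is essentially right once the rest is in place: $aa$ is the unique new palindrome lying over the empty word, and $aaa$ is the new palindrome lying over $a$.
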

\begin{proof}
Let $a$ be present in the positions $i$, $j$ and $k$ of $w$ where $i< j<k$. We denote the letter at the $i$-$th$, $j$-$th$ and $k$-$th$ position of $w$ by $a_i$, $a_j$ and $a_k$, respectively, where $a_i=a_j=a_k=a$.  Let $s\in \SPAL(w)\setminus \SPAL(w')$. Then, $s$ is either  of the form 
$u_1au_2au_1^R$ or $u_3au_3^R$ 
where $u_1u_2u_1^R,\;u_3u_3^R\prec w'$ are palindromes such that $|u_1|_a=|u_3|_a=0$. Note that  for $s$ equal to $u_1au_2au_1^R$, $s$ is either  $u_1a_iu_2au_1^R$ or $u_1au_2a_ku_1^R$. Also, for $s$ equal to $u_3au_3^R$, $s$ is either $u_3a_iu_3^R$ or $u_3a_ku_3^R$. We prove that corresponding to each palindrome  $s'\prec w'$ at most one $s \in \SPAL(w)\setminus \SPAL(w')$ such that  $s'\prec s$.  If $u_1u_2u_1^R\neq u_3u_3^R$, then corresponding to $s'$ equal to  $u_1u_2u_1^R$ and $u_3u_3^R$, we have, $s$ equal to  $u_1au_2au_1^R$ and $u_3au_3^R$, respectively.
 We are left to  consider the case when $u_1u_2u_1^R=u_3u_3^R$ and $u_1au_2au_1^R\neq  u_3au_3^R$ are both in $\SPAL(w)\setminus \SPAL(w')$.  We get, $u_3=u_1u_4$ and $u_4u_4^R=u_2$. Since, $|u_1|_a=|u_4|_a=0$ and $|w|_{b}\leq 2$, for  all $b \in \Sigma$ and $a\neq b$,  we have the following cases:
 \begin{itemize}
   \item $u_1a_iu_4u_4^Ra_ju_1^R\in \SPAL(w)$ or $u_1a_ju_4u_4^Ra_ku_1^R\in \SPAL(w)$ : In this case, as $|w|_a=3,\;   u_3au_3^R\notin \SPAL(w)$, which is a contradiction.
     \item $u_1a_iu_4a_ju_4^Ra_ku_1^R\in \SPAL(w)$: In this case,  $u_3au_3^R=u_3a_ju_3^R\prec w'$ is a palindrome, which is a contradiction.
     \item $u_1a_iu_5a_ju_6u_6^Ru_5^Ra_ku_1^R\prec w$ or $u_1a_iu_5u_6u_6^Ra_ju_5^Ra_ku_1^R\prec w$ where $u_4=u_5u_6$ for $u_5,u_6\in \Sigma^*$ : In this case, as $|w|_a=3,\;   u_3au_3^R\notin \SPAL(w)$, which is a contradiction.
     \end{itemize} Hence, corresponding to each palindrome $s'\prec w'$, at most one  $s\in \SPAL(w)\setminus \SPAL(w')$ such that $s'\prec s$. We also observe  that  corresponding to $u_1=u_2=u_3=\lambda$,  palindromes $aa,\;a\prec w$  where  $aa\nprec w'$ but $a\prec w'$.  Thus, $|\SPAL(w)\setminus \SPAL(w')|\leq SP(w')+1$ which implies $SP(w)-SP(w')\leq SP(w')+1$. 
\end{proof}

We  have the following result for the word  $w=v_1av_2av_3$  such that  $|v_1|_a=|v_3|_a = 0$  and $|w|_a\geq3$ for $a\in \Sigma$.
\begin{lemma}\label{z1zz}
Let $w =v_1av_2av_3$ and $w'= v_1v_2v_3$ such that  $|v_1|_a=|v_3|_a=0$ for $a\in \Sigma$ and $|w|_a\geq 3$. 
\begin{enumerate}
\item If $v_3 = v_1^R$, then $SP(w)\leq 2SP(w')+1$.
\item If $v_3 \neq v_1^R$, then $SP(w)\leq 3SP(w')+1-N_o$ where $N_o$ is the number of scattered palindromic subwords of odd length in $w'$.
\end{enumerate}
\end{lemma}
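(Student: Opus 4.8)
The plan is to mimic the structure of Lemma \ref{zzz}, but now with only two occurrences of $a$ being removed (so $|w|_a\ge 3$ becomes $|w|_a-2\ge 1$ in $w'$, meaning $w'$ may still contain $a$'s — here exactly $|w|_a-2$ of them, but for the bound we only track how many scattered palindromic subwords are \emph{created} by reinserting the two outermost $a$'s). Write $s$ for an element of $\SPAL(w)\setminus\SPAL(w')$. Since the two inserted letters are the first and last $a$'s of $w$ (at positions $i$ and $k$ say, with all other $a$'s strictly between them), any $s$ that uses a newly available occurrence must be of one of the forms $u_1 a u_2 a u_1^R$ (using both new $a$'s symmetrically), $u_3 a u_3^R$ (using one new $a$ as the central letter of an odd palindrome), or — and this is the new phenomenon compared to Lemma \ref{zzz} — $u a v$-type subwords that use exactly one of the two new $a$'s together with an interior $a$ of $w'$ to form the symmetric pair. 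I would enumerate these forms carefully and, for each palindrome $s'\prec w'$, bound the number of $s\in\SPAL(w)\setminus\SPAL(w')$ with $s'\prec s$ and $|s|>|s'|$.

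For part (1), where $v_3=v_1^R$, the word $w$ is "almost symmetric about its block of $a$'s": the outer parts mirror each other. In this situation a scattered palindrome created by inserting the two outer $a$'s symmetrically, $u_1 a u_2 a u_1^R$, and one created by using a single new $a$ centrally, $u_3 a u_3^R$, cannot both be genuinely new for the \emph{same} underlying $s'=u_1u_2u_1^R=u_3u_3^R$ — the symmetry $v_3=v_1^R$ forces the "single central $a$" version to already be a scattered subword of $w'$ (because the mirror structure lets one find the central $a$ among the interior $a$'s). So I would argue, exactly as in Lemma \ref{zzz}, that to each palindrome $s'\prec w'$ there corresponds at most one new $s$, giving $|\SPAL(w)\setminus\SPAL(w')|\le SP(w')$, plus the single exceptional contribution from $u_1=u_2=u_3=\lambda$ (the subword $aa$ may be newly creatable while $a$ is already present), yielding $SP(w)\le 2SP(w')+1$.

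For part (2), where $v_3\neq v_1^R$, the above collapse fails: now the symmetric insertion $u_1 a u_2 a u_1^R$ and the central insertion $u_3 a u_3^R$ can \emph{both} be new for the same $s'$, and moreover a third type — using exactly one new $a$ paired against an interior $a$, of the shape $u\,a\,u'\,a\,u^R$ with one boundary $a$ and one interior $a$ — can also contribute. The key bookkeeping point is that a palindrome $s'\prec w'$ of \emph{even} length cannot spawn a new odd-length palindrome with a new central $a$ (there is no central slot of the right parity to insert into while keeping $s'$ as the deletion), so only the even-length $s'$'s and the \emph{odd}-length $s'$'s behave differently: each odd $s'$ contributes at most $2$ new palindromes whereas each even $s'$ contributes at most $3$ (or one re-does the count so that the total is $3SP(w') - N_o$, where $N_o$ counts the odd-length palindromes of $w'$, each of which "loses" one potential contribution). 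Adding the same $+1$ boundary correction from $u_1=u_2=u_3=\lambda$ gives $SP(w)\le 3SP(w')+1-N_o$.

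The main obstacle, as in Lemma \ref{zzz}, is the careful case analysis showing that the claimed per-$s'$ multiplicities ($1$ in case (1); $2$ for odd $s'$ and $3$ for even $s'$ in case (2)) are not exceeded — i.e. ruling out the scenario where a single $s'\prec w'$ is dominated by more new scattered palindromes than allowed. This requires splitting $s$ according to which of positions $i,k$ (and which interior $a$, if any) are used, writing $s=u\cdots u^R$ accordingly, and deriving a contradiction (typically that $|w|_a$ would have to exceed its actual value, or that the supposedly-new $s$ is in fact already $\prec w'$) in every over-counted configuration; the hypothesis $|w|_a\ge 3$ (so $w'$ retains at least one $a$) is what makes the "central $a$" reinsertions behave, and the asymmetry hypothesis $v_3\neq v_1^R$ in part (2) is exactly what prevents the convenient collapse used in part (1). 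I expect the argument to run parallel to the proof of Lemma \ref{zzz} but with one extra layer of case distinction tracking length parity.
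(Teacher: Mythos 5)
Your proposal follows the same strategy as the paper's proof: classify each $s\in\SPAL(w)\setminus\SPAL(w')$ as $u_1au_2au_1^R$ (outermost $a$'s of $s$ displayed) or $u_3au_3^R$ (a single central $a$), associate to $s$ the palindrome $s'\prec w'$ obtained by deleting the inserted $a$'s, and bound the size of each fibre. The genuine gap --- in your sketch and, as far as I can tell, in the paper's own argument --- is the per-$s'$ multiplicity claim, which you defer to ``careful case analysis'' without supplying it. The problem is that a fixed $s'\prec w'$ can admit several decompositions $s'=u_1u_2u_1^R$ with $|u_1|_a=0$, and each one produces a \emph{different} candidate $u_1au_2au_1^R$; nothing in the hypotheses of Lemma \ref{z1zz} forbids two of them from being simultaneously new. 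In Lemma \ref{zzz} this is exactly what the extra hypothesis $|w|_b\le 2$ for $b\ne a$ is used to exclude, and that hypothesis is absent here. Concretely, take $w=bababab$, so $v_1=b$, $v_2=bab$, $v_3=b=v_1^R$ and $|w|_a=3$: then $w'=bbabb$ with $SP(w')=7$, while $SP(w)=17>2\cdot 7+1$; the palindrome $s'=bb$ spawns both $abba$ and $baab$, and $s'=bab$ spawns both $ababa$ and $baaab$. So the ``at most one new $s$ per $s'$'' step of part (1) cannot be established by the argument you outline; a correct write-up must either impose an additional multiplicity restriction (as holds for the word to which the lemma is actually applied in Theorem \ref{ff}) or treat multiple decompositions of $s'$ explicitly.

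A second, smaller problem is in your bookkeeping for part (2): the parenthetical justification is backwards. Every \emph{even}-length palindrome $s'$ can be written as $u_3u_3^R$ and hence \emph{can} be the central-deletion of a new odd palindrome $u_3au_3^R$; it is the \emph{odd}-length $s'$ that cannot. Your stated multiplicities (odd $s'$ accounts for at most $2$ palindromes of $w$, even $s'$ for at most $3$, giving $3SP(w')-N_o+1$) agree with the paper's, but they contradict the reason you give for them. Also, the ``third type'' $u\,a\,u'\,a\,u^R$ you introduce is not a new word at all --- it is the same word $u_1au_2au_1^R$ with one of the displayed $a$'s matched to an interior occurrence --- so it must not be counted as an extra contribution.
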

\begin{proof}
Let the first and the last occurrence of $a$ in $w$ be at the positions $i$ and $k$, respectively, of $w$ and let $j$ be any other position at which $a$ occurs in $w$ where $i< j< k$. We denote the letter at the $i$-$th$, $j$-$th$ and $k$-$th$ position of $w$ by $a_i$, $a_j$ and $a_k$, respectively, where $a_i=a_j=a_k=a$.  Let $s\in \SPAL(w)\setminus \SPAL(w')$. Then, $s$ is either  of the form 
$u_1au_2au_1^R$ or $u_3au_3^R$ 
where $u_1u_2u_1^R,\;u_3u_3^R\prec w'$ are palindromes such that $|u_1|_a=|u_3|_a=0$. Note that  for $s$ equal to $u_1au_2au_1^R$, $s$ is either  $u_1a_iu_2au_1^R$ or $u_1au_2a_ku_1^R$. Also, for $s$ equal to $u_3au_3^R$, $s$ is either $u_3a_iu_3^R$ or $u_3a_ku_3^R$. We have the following:
\begin{itemize}
    \item $v_3 = v_1^R$: In this case, $w =v_1av_2av_1^R$,
    $w'= v_1v_2v_1^R$ and $|v_1|_a=0$.  We prove that corresponding to each palindrome  $s'\prec w'$ at most one $s \in \SPAL(w)\setminus \SPAL(w')$ such that  $s'\prec s$. Since,  $|v_2|_a\geq 1$, we get
$u_3au_3^R\prec w'$.  Hence, $s$ is only of the form $u_1au_2au_1^R$ 
where $u_1u_2u_1^R \prec w'$ is a palindrome. Then, corresponding to $s'$ equal to  $u_1u_2u_1^R$, we have, $s$ equal to  $u_1au_2au_1^R$.
Since $|w|_a\geq 3$,  we also observe  that  corresponding to $u_1=u_2=\lambda$,  palindromes $aa,\;a\prec w$  such that $a\prec w'$ but if $|w|_a=3$, $aa\nprec w'$. Thus, $|\SPAL(w)\setminus \SPAL(w')|\leq SP(w')+1$ which implies $SP(w)-SP(w')\leq SP(w')+1$.
\item $v_3 \neq v_1^R$: We observe that the length of $u_3u_3^R$ is even and the length of $u_1u_2u_1^R$ can be either even or odd. Thus, corresponding to  $s'\in \SPAL(w')$ of even length, i.e., of the form $u_3u_3^R$ or $u_1u_2u_1^R$, there can be at most two scattered palindromic subwords, i.e., $u_3au_3^R, u_1au_2au_1^R  \in \SPAL(w)\setminus \SPAL(w')$. Also, corresponding to  $s'\in \SPAL(w')$ of odd length, i.e., of the form $u_1u_2u_1^R$, there can be at most one scattered palindromic subword, i.e.,   $u_1au_2au_1^R \in \SPAL(w)\setminus \SPAL(w') $.  Since $|w|_a\geq 3$,  we also observe  that  corresponding to $u_1=u_2=\lambda$,  palindromes $aa,\;a\prec w$  such that $a\prec w'$ but if $|w|_a=3$, $aa\nprec w'$. Hence, $SP(w)-SP(w')\leq 2SP(w')+1-N_o$ where $N_o$ is the number of scattered palindromic subwords of odd length in $w'$. 
\end{itemize}
\end{proof}
 We use Proposition \ref{j1}, and Lemmas \ref{zzz} and \ref{z1zz} to give a tight bound on the maximum number of scattered palindromic subwords in a word of length $n$ that has at least $\frac{n}{2}$ distinct letters. 
\begin{theorem}\label{ff}
Let  $SP^{n,q}\;=\max\{SP(w)\;|\;w\in \Sigma^n \text{ and } |Alph(w)|=q\}$, then 
 $SP^{n,q}=2^{n-q}(2q-n+2)-2$ for $q\geq \frac{n}{2}$. 
\end{theorem}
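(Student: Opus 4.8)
The plan is to establish the upper bound $SP^{n,q} \le 2^{n-q}(2q-n+2) - 2$ by induction on $n$ (equivalently on the number of repeated positions $n-q$), and then exhibit a word achieving equality so the bound is tight. Write $r = n - q$, the "excess," so that a word $w$ of length $n$ with $q$ distinct letters has total multiplicity surplus $\sum_{a}(|w|_a - 1) = r$. When $r = 0$ the word is a block word (indeed a permutation word) and $SP(w) = n = 2^0(2n - n + 2) - 2$, giving the base case. The inductive step removes occurrences of a suitably chosen repeated letter and invokes one of Lemma~\ref{zzz}, Lemma~\ref{z1zz}, or Proposition~\ref{j1} depending on how that letter sits inside $w$.

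**First I would** set up a case analysis on the structure of a word $w$ realizing $SP^{n,q}$. Pick a letter $a$ with $|w|_a \ge 2$. If every repeated letter occurs exactly twice (so $w$ has the form relevant to Proposition~\ref{j1} after stripping the non-repeated letters), one compares directly against the $q' = q-1$, $n' = n-2$ case: removing the two copies of $a$ yields $w'$ with $SP(w) \le 2SP(w') + (\text{something controlled})$, and one checks $2\bigl(2^{n'-q'}(2q'-n'+2)-2\bigr) + \delta \le 2^{n-q}(2q-n+2)-2$; note $n-q = n'-q'$ here, and $2q'-n'+2 = 2q-n$, so the target inequality becomes $2\cdot 2^{r}(2q-n) - 4 + \delta \le 2^{r}(2q-n+2) - 2$, i.e. $2^{r}(2q-n-2) + \delta \le 2$, which is exactly the kind of slack Lemma~\ref{zzz} / Proposition~\ref{j1} provides ($\delta \le 1$ there, or $\le 2$ after accounting for the $aa$ vs $a$ subtlety). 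If instead some letter $a$ has $|w|_a \ge 3$, I would delete two of its occurrences (keeping at least one so $q$ is unchanged), reducing $n$ by $2$ while keeping $q$ fixed, so $r$ drops by $2$; here Lemma~\ref{z1zz} gives either $SP(w) \le 2SP(w')+1$ or $SP(w)\le 3SP(w') + 1 - N_o$, and one must check the arithmetic $2\cdot\bigl(2^{r-2}(2q-(n-2)+2)-2\bigr)+1 \le 2^{r}(2q-n+2)-2$ and the analogous inequality with the factor $3$, using that for $q \ge n/2$ the quantity $2q-n+2 \ge 2$ so the exponential term dominates and $N_o$ is large enough to absorb the factor-$3$ case. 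I would organize these so that in each reduction the "excess" $r$ strictly decreases, guaranteeing termination, and so that the word remaining always still has $q' \ge n'/2$ (this needs a small argument: removing two copies of a letter of multiplicity $\ge 3$ keeps $q' = q$ and $n' = n-2$, so $q' \ge n'/2$ follows from $q \ge n/2 - 1$, which holds; the twice-occurring case drops both $q$ and $n$ by the right amounts).

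**The tightness half** is cleaner: I would verify that the word
$$ w_{n,q} \;=\; a_1 a_2 \cdots a_{n-q}\, a_{n-q}\, a_{n-q-1}\cdots a_1\, a_1^{\,?}\cdots $$
— more precisely, the word consisting of a palindrome core on the $n-q$ repeated letters (each appearing twice, arranged as in the equality case of Proposition~\ref{j1}, namely $a_1\cdots a_{n-q}a_{n-q}\cdots a_1$, which has length $2(n-q)$ and $SP = 2^{n-q+1}-2$) followed by $q-(n-q) = 2q-n$ fresh distinct letters $b_1,\dots,b_{2q-n}$ — achieves the bound. The scattered palindromic subwords of such a $w$ are: those entirely within the core ($2^{n-q+1}-2$ of them), plus those singleton subwords $b_i$ contributed by the $2q-n$ new letters, plus subwords of the form $b_i\, p\, ?$ — but a single $b_i$ cannot be sandwiched, so actually the new letters only add length-one palindromes... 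I would instead take the arrangement that interleaves to maximize: place the $2q-n$ unique letters so they can each wrap an arbitrary core-palindrome, giving roughly $(2q-n)$ extra copies of each even-length core palindrome; a direct count should then give $2^{n-q}(2q-n) + (2^{n-q+1}-2) = 2^{n-q}(2q-n+2)-2$. I would present the explicit optimal word, list its scattered palindromic subwords by length, and sum a binomial-type series to confirm equality.

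**The main obstacle** I anticipate is the bookkeeping in the inductive step: choosing \emph{which} repeated letter and \emph{which} of its occurrences to delete so that (i) the structural hypothesis of exactly one of Lemmas~\ref{zzz}, \ref{z1zz}, Proposition~\ref{j1} applies, (ii) the residual word still satisfies $q' \ge n'/2$, and (iii) the resulting arithmetic inequality actually closes rather than just nearly closes. The factor-$3$ branch of Lemma~\ref{z1zz} is the delicate one — it is only safe because $N_o$, the number of odd-length scattered palindromic subwords of $w'$, is forced to be large (at least $q'$, since every single letter counts, and in the extremal regime comparable to half of $SP(w')$), so I would need a supporting lemma bounding $N_o$ from below, or alternatively argue that when $v_3 \ne v_1^R$ one can reroute the reduction through the $v_3 = v_1^R$ case by a different choice of deleted letter. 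Establishing that "$q \ge n/2$ forces the extremal word to have the palindrome-plus-distinct-tail shape, up to which Lemma~\ref{z1zz}(1) always applies" is, I expect, the crux of the whole argument.
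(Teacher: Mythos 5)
There is a genuine gap in the upper-bound half of your argument, in the branch where the repeated letters all occur exactly twice but singleton letters are also present (which is forced whenever $q>\frac{n}{2}$). Your plan there is to delete both copies of a twice-occurring letter $a$ and compare against the $(n-2,q-1)$ bound via an inequality of the form $SP(w)\le 2SP(w')+\delta$. First, no lemma in this setup supplies such an inequality: Lemma~\ref{zzz} and Lemma~\ref{z1zz} require $|w|_a\ge 3$ and delete two of at least three occurrences, and Proposition~\ref{j1} requires \emph{every} letter to occur exactly twice (no singletons), so none of them applies. Second, even granting $SP(w)\le 2SP(w')+\delta$ with small $\delta$, your own arithmetic reduces the target to $2^{r}(2q-n-2)+\delta\le 2$, which fails as soon as $2q-n-2\ge 1$, i.e.\ for all $q\ge \frac{n}{2}+2$; it is not ``slack the lemmas provide.'' A concrete counterexample to the method: for $w=a_1a_2\cdots a_{n-1}a_1$ one has $SP(w)=2n-2$ while $SP(w')=n-2$, so $SP(w)=2SP(w')+2$ and your comparison yields $2n-2\le 2n-2$ only by accident of this example, whereas the general inequality you need ($2\cdot(\text{bound for }w')+\delta\le \text{bound for }w$) reads $2(2^{r}(2q-n)-2)+\delta\le 2^{r}(2q-n+2)-2$ and is false throughout the regime $q>\frac{n}{2}+1$.

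The missing idea, which is the paper's central reduction, is to remove a letter $a_k$ with $|w|_{a_k}=1$ whenever one exists (equivalently whenever $q-1\ge\frac{n-1}{2}$). This drops $(n,q)$ to $(n-1,q-1)$, keeps the invariant $q'\ge\frac{n'}{2}$, and --- crucially --- the palindromes destroyed are exactly those with $a_k$ at their center, whose two wings are a reversed pair of distinct letters drawn from the at most $n-q$ letters that repeat in $w$; hence at most $\sum_{i=0}^{n-q}\binom{n-q}{i}=2^{n-q}$ are lost, and $2^{n-q}(2q-n+1)-2+2^{n-q}=2^{n-q}(2q-n+2)-2$ closes exactly. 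Your lemmas \ref{zzz}/\ref{z1zz} and Proposition~\ref{j1} are then needed only in the boundary case $n=2q$ where no such clean singleton-removal is available, which is essentially how the paper organizes Cases 1 and 2. Separately, your tightness construction needs repair: appending the $2q-n$ unique letters after the core $a_1\cdots a_{n-q}a_{n-q}\cdots a_1$ only contributes singletons, and a once-occurring letter cannot ``wrap'' anything --- it can only sit at the center. The correct extremal word is the nested one, $a_1a_2\cdots a_q a_{n-q}a_{n-q-1}\cdots a_1$, where each unique letter can be inserted at the center of each even-length core palindrome, yielding the count $(2q-n)2^{n-q}+(2^{n-q+1}-2)=2^{n-q}(2q-n+2)-2$ you wrote down.
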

\begin{proof}
We prove the result by induction on $n$ which is the length of the word. The base case for $n=0$ trivially holds.
Assume the result to  be true for all words of length less than or equal to $l$ with $q\geq \frac{l}{2}$ distinct letters for $l\leq n-1$.  Let $w$ be a word of length $n$ with $q$ distinct letters $a_i$ for $1\leq i\leq q$, where $q\geq \frac{n}{2}$. If $|w|_{a_i}\neq 1$ for all $i$, then as $q\geq \frac{n}{2}$, we have, $|w|_{a_i}=2$ for all $i$, and  $q=\frac{n}{2}$. By Proposition \ref{j1}, such a word has a maximum of $2^{q+1}-2= 2^{n-q}(2q-n+2)-2$ scattered palindromic subwords. Hence, we are done. Otherwise, there exists a $k$ such that $|w|_{a_k}= 1$.  We have the following cases:\\
$\sbullet[1]$ If $q-1\geq \frac {n-1}{2}$, then remove $a_k$ from $w$ to obtain the word $w'$. Now, $w'$ is a word of length $n-1$ with $q-1$ distinct letters.  Then by induction,  $SP(w')\leq 2^{n-q}(2q-n+1)-2$. We now count the number $n_l$ of scattered palindromic subwords of length $l$ removed on removing $a_k$ from $w$. Note that $n_l=0$ if $l$ is even.  An odd length scattered palindromic subword that will be removed on removing $a_k$ will have $a_k$ in the middle position. We have to choose $i$ letters from the remaining $n-q$ letters in order to form a scattered palindromic subword of length $2i+1$ because there are $q$ distinct letters in $w$.  Hence, $n_{2i+1}\leq {n-q \choose i}$ for $1\leq i\leq n-q$. Thus, $SP(w)=SP(w')+\sum_{i=1}^{n-q}n_{2i+1}\leq  2^{n-q}(2q-n+1)-2+2^{n-q}=2^{n-q}(2q-n+2)-2$, and we are done.\\
$\sbullet[1]$ If $q-1< \frac{n-1}{2}$, then as $q\geq \frac{n}{2},$ we get, $n=2q$. We have the following sub-cases:
    \begin{enumerate}
      
        \item [Case 1\;:\;] There is only one letter $a_k$ that occurs once in $w$: Suppose, for all $j\neq k$, $|w|_{a_j}\geq 3$, then we have, $1+3(q-1) \leq 2q$, which implies $q\leq 2$, and $n\leq 4$. This case can be verified by direct computation. 
        Thus, there exists at least one letter that occurs twice in $w$. Let $m$ denote the number of letters with $3$ or more occurrences in $w$. Then, $1+3m+2(q-m-1)\leq 2q$, which implies $m\leq1$. We observe that $m\neq 0$ as  $1+2(q-1)\neq 2q$, and hence, $m=1$, i.e., there is only one letter $a_i$ that occurs thrice. Thus, $w$ is such that $|w|_{a_k}=1$, $|w|_{a_i}=3$ and $|w|_{a_j} =2$ for all $j\neq i$, $j\neq k$, $1\le j\le q$. Remove the first and the last occurrence of $a_i$ from $w$ to obtain the word $w'$. Now, the length of $w'$ is $n-2$ and has $q$ distinct letters. Also, $n-2=2q-2\leq 2q$. By induction, $SP(w')\leq 2^{q}-2$. Using Lemma \ref{zzz},  we get,      $SP(w)\leq 2SP(w')+1\leq 2(2^{q}-2)+1\leq 2^{q+1}-2$, and we are done. 
        \item [Case 2\;:\;] There exist at least two letters $a_{k}$ and $a_{k'}$ that occur once in $w$:
        Note that as $n=2q$, there must also be a letter $a_i$ such that $|w|_{a_i}\geq 3$. Remove $a_{k}$ and $a_{k'}$ from $w$ to obtain the word $w'$ and then remove the
        first and the last occurrence of $a_i$ from $w'$ to obtain the word $w''$. Now, the length of $w''$ is $n-4$  and  has $q-2$ distinct letters. Also, $n-4=2q-4\leq 2(q-2)$. By induction, $SP(w'')\leq 2^{q-1}-2$.  We have two cases:
        \begin{enumerate}
            \item If $w=w_1a_iw_2a_iw_1^R$, where $|w_1|_{a_i}=0$, for $w_1, w_2\in \Sigma^*$, then by Lemma \ref{z1zz}, $SP(w')\leq 2SP(w'')+1$. We now count the number $n_l$ of scattered palindromic subwords of length $l$ removed on removing each of $a_{k}$ and $a_{k'}$ from $w$. Note that $n_l=0$ if $l$ is even. Also, as there are $q$ distinct letters, the length of the longest such scattered palindromic subword is $2(q-1)+1$.  Now,  $n_{2k+1}\leq {q-1 \choose k}$ for $0\leq k\leq q-1$. We get, $\sum_{i=1}^{2q}n_i=2^{q-1}$. So, $SP(w)\leq 2SP(w'')+1+2(2^{q-1})=2^{q+1}-3\leq 2^{q+1}-2$.
            \item If  $w= w_1a_iw_2a_iw_3$, where $w_3 \neq w_1^R$, then by  Lemma \ref{z1zz}, we get,  $SP(w')\leq   3SP(w'')+1$ minus the number of   scattered palindromic subwords of odd length  in $w''$.  Let $N_e$ denote the number of scattered palindromic subwords of even length in $w''$. By Lemma \ref{hh}, for $t$ odd, we have, $SP_{t}(w'')\geq SP_{t+1}(w'')$, thus, $$SP(w')\leq   SP(w'')+2SP(w'')+1-N_e.$$
    We now count the number of scattered palindromic subwords removed on removing each of $a_{k}$ and $a_{k'}$ from $w$.\\\\
     $\sbullet[1]$ We first count the number $n_l$ of scattered palindromic subwords  of length $l$ removed on removing each of $a_{k}$ and $a_{k'}$ from $w$ that do not involve the first and the last occurrence of $a_i$. Note that $n_l=0$ if $l$ is even. So, the scattered palindromic subwords removed on removing $a_k$ (resp. $a_k'$) from $w$ that do not involve the first and the last occurrence of $a_i$
    is at most  the number of even scattered palindromic subwords in $w''$, i.e.,  $N_e$. 
     The length of $w''$ is $2(q-2)$, so the number of scattered palindromic subwords of length $2t+1$ that are removed by removing $a_k$ (resp. $a_k'$) from $w$, i.e., $n_{2t+1} \leq {q-2 \choose t}$ for $0\leq t\leq q-2$. Thus, the number of scattered palindromic subwords removed on removing $a_k$ and $a_k'$ from $w$ that do not involve the first and the last occurrence of $a_i$ is at most  $2N_e=2\sum_{i=1}^{2(q-2)+1}n_i
     $.\\
      $\sbullet[1]$ It is only left to count the number $m_l$ of  palindromic scattered subwords of length $l$ which involve $a_k$ (resp. $a_k'$) and the first and the last occurrence of $a_i$. Clearly, they are of odd length and have  $a_k$ (resp. $a_k'$) at the middle position and have the first and the last occurrence of $a_i$ equidistant from the middle position.  Here, $w\neq w_1a_iw_2a_iw_1^R$, the maximum length of such a scattered palindromic subword is $2(q-2)+1$. Now, $3$ letters are already fixed, we get, 
      $ m_{2t+3} \leq {q-3 \choose k}, \text{ for} ~0\leq t\leq q-3.$ 
      Thus, the number palindromic scattered subwords removed on removing $a_k$ and $a_k'$ from $w$ which involve the first and the last occurrence of $a_i$ is at most $2\sum_{i=1}^{2(q-2)+1}m_i.$\\
      \\
      Hence, the number of scattered palindromic subwords removed on removing  $a_{k}$ and $a_{k'}$ from $w$ is at most $ 2N_e+2\sum_{i=1}^{2(q-2)+1}m_i $. Now,
         \begin{align*} SP(w)&\leq SP(w')+2N_e+2\sum_{i=1}^{2(q-2)+1}m_i   \\  &\leq 3SP(w'')+1+ \sum_{i=1}^{2(q-2)+1}n_i+2\sum_{i=1}^{2(q-2)+1}m_i\\&\leq3SP(w'')+1+2^{q-2}+2(2^{q-3})\\&\leq2^{q+1}-5\leq 2^{q+1}-2.
   \end{align*}
      
        \end{enumerate}
     \end{enumerate}
Thus, in all the cases,  $SP^{n,q}\leq2^{n-q}(2q-n+2)-2$ for $q\geq \frac{n}{2}$.\\

Consider for $q\geq \frac{n}{2}$, the word $a_1a_2\cdots a_qa_{n-q}a_{n-q-1}\cdots a_1$, where $a_i\in \Sigma$ are distinct. Note that 
$|w|=n$  with $q$ distinct letters and $SP(w)=2^{n-q}(2q-n+2)-2=SP^{n,q}$, which proves that $SP^{n,q}=2^{n-q}(2q-n+2)-2$ for $q\geq \frac{n}{2}$.
\end{proof}

We have the following observation.
\begin{remark}\label{iii}
It can be observed that for $q\geq \frac{n}{2}$ the  maximum number of scattered palindromic subwords in a word of length $n$ is 
    \[  \left\{
\begin{array}{llll}
 3(2^{\ceil{\frac{n}{2}}-1})-2 , & & & if\; n \;is \;odd,  \\
    2^{\frac{n}{2}+1}-2  , & & & if\; n\; is \;even.
\end{array} 
\right. \]
and this bound is achieved only when 
\[  q=\left\{
\begin{array}{llll}
    \ceil{\frac{n}{2}}, & & & if\; n \;is \;odd  \\
     \frac{n}{2} \text{ and } \frac{n}{2}+1 , & & & if\; n\; is \;even.
\end{array} 
\right. \]
\end{remark}

With a modification to an   open source code (\cite{code1}) that counts the number of scattered palindromic subwords in a given  word of length $n$ with time complexity $\mathcal{O}({n^2})$, we checked all the possible $q^n$ words  and found values of maximum number of scattered palindromic subwords in a  word of length $n$ with exactly $q$ distinct letters, which is depicted in Table \ref{Table 3} for $1\leq n,q\leq 10$.
\begin{table}[h]
   \begin{center}
\begin{tabular}{|c|c|c|c|c|c|c|c|c|c|c|c|}
 \hline 
 ${n}\backslash{q}$&1&2&3&4&5&6&7&8&9&10 \\
  \hline
  1&1&$-$&$-$&$-$&$-$&$-$&$-$&$-$&$-$&$-$ \\
   \hline
   2&2&2&$-$&$-$&$-$&$-$&$-$&$-$&$-$&$-$ \\
   \hline
   3&3&4&3&$-$&$-$&$-$&$-$&$-$&$-$&$-$ \\\hline
   4&4&6&6&4&$-$&$-$&$-$&$-$&$-$&$-$\\\hline
     5&5&9&10&8&5&$-$&$-$&$-$&$-$&$-$
    \\ \hline
     6&6&12&14&14&10&6&$-$&$-$&$-$&$-$
     \\\hline
     7&7&17&21&22&18&12&7&$-$&$-$&$-$\\\hline
     8&8&22&28&30&30&22&14&8&$-$&$-$\\\hline
     9&9&30&41&45&46&38&26&16&9&$-$\\\hline
      10&10&38&54&60&62&62&46&30&18&10\\\hline
\end{tabular}
\end{center}    
   \caption{Values of $SP^{n,q}$ for $1\leq n,q\leq 10$}
    \label{Table 3}
\end{table}
Based on Table \ref{Table 3} and Remark \ref{iii}, we have the following conjecture.

\begin{conjecture}\label{1}
The  maximum number of scattered palindromic subwords in a word of length $n$ is 
    \[  \left\{
\begin{array}{llll}
 3(2^{\ceil{\frac{n}{2}}-1})-2 , & & & if\; n \;is \;odd,  \\
    2^{\frac{n}{2}+1}-2  , & & & if\; n\; is \;even.
\end{array} 
\right. \]
and this bound is achieved only when 
\[  q=\left\{
\begin{array}{llll}
    \ceil{\frac{n}{2}}, & & & if\; n \;is \;odd  \\
     \frac{n}{2} \text{ and } \frac{n}{2}+1 , & & & if\; n\; is \;even.
\end{array} 
\right. \]
\end{conjecture}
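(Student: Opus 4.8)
The plan is to split the maximization over alphabet size into the two regimes $q\ge\frac n2$, where Theorem \ref{ff} supplies a closed formula, and $q<\frac n2$, which must be bounded separately. The first regime is a routine finite-difference computation; the second is the genuine obstacle and is the reason the statement is still conjectural.

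First I would maximize the formula of Theorem \ref{ff} over $q\ge\frac n2$. Writing $j=n-q$ (so $0\le j\le\floor{\frac n2}$) and $g(j)=2^{\,j}(n-2j+2)$, Theorem \ref{ff} gives $SP^{n,q}=g(j)-2$, and the forward difference is
\[
g(j+1)-g(j)=2^{\,j}\bigl(n-2j-2\bigr).
\]
Thus $g$ strictly increases while $j<\frac{n-2}{2}$ and strictly decreases while $j>\frac{n-2}{2}$. For $n$ odd the quantity $n-2j-2$ never vanishes, so $g$ has a unique maximum at $j=\frac{n-1}{2}$, i.e. $q=\ceil{\frac n2}$, with value $g=3\cdot2^{(n-1)/2}$; for $n$ even it vanishes only at $j=\frac n2-1$, so the maximum is attained exactly at $j\in\{\frac n2-1,\frac n2\}$, i.e. $q\in\{\frac n2+1,\frac n2\}$, with value $g=2^{\,n/2+1}$. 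Subtracting $2$ recovers the claimed maxima $3(2^{\ceil{n/2}-1})-2$ and $2^{\,n/2+1}-2$, together with the asserted maximizing values of $q$ within the range $q\ge\frac n2$, and shows every other $q\ge\frac n2$ gives a strictly smaller value.

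It then remains to prove that $SP^{n,q}$ is \emph{strictly} below this maximum whenever $q<\frac n2$. For this I would establish the universal upper bound $SP(w)\le M(n)$ for every word $w$ of length $n$, where $M(n)$ denotes the conjectured maximum; a direct check shows $M$ satisfies $M(n)=2M(n-2)+2$ with $M(1)=1$ and $M(2)=2$. I would argue by strong induction on $n$. If $w$ has at least $\frac n2$ distinct letters the bound is Theorem \ref{ff}; otherwise pigeonhole forces some letter $a$ with $|w|_a\ge3$, and writing $w=v_1av_2av_3$ with $|v_1|_a=|v_3|_a=0$ puts us in the setting of Lemma \ref{z1zz}. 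In the favorable case $v_3=v_1^R$ that lemma yields $SP(w)\le 2\,SP(w')+1$ for a word $w'$ of length $n-2$, and the inductive hypothesis closes the estimate via $SP(w)\le 2M(n-2)+1<M(n)$; Lemma \ref{zzz} plays the same role when the repeated letter occurs exactly three times while all others occur at most twice.

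The hard part will be the remaining case $v_3\ne v_1^R$ of Lemma \ref{z1zz}, which only gives $SP(w)\le 3\,SP(w')+1-N_o$. Combining this with Lemma \ref{hh} (whence $N_o\ge\frac12 SP(w')$) yields merely $SP(w)\le\frac52 SP(w')+1$, which is too weak to reproduce the factor-two recursion. Recovering the sharp bound requires the more delicate double-removal bookkeeping used in Case 2 of the proof of Theorem \ref{ff} — deleting the extreme occurrences of $a$ together with another repeated letter to descend to length $n-4$ and separately counting the odd-length subwords created — now carried out uniformly for all $q<\frac n2$ rather than only for $q=\frac n2$. Making this counting uniform, and simultaneously keeping track of strictness so as to exclude every $q<\frac n2$ from attaining the maximum (equivalently, proving the strict monotonicity $SP^{n,q}<SP^{n,q+1}$ for $q<\ceil{\frac n2}$, which an explicit letter-splitting construction on an optimal $q$-letter word should supply), is the essential difficulty and the reason the full statement is presently only conjectural.
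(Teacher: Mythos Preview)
The paper offers no proof of this statement: it is stated as Conjecture~\ref{1}, with the only supporting material being Remark~\ref{iii} (which records the maximum of the formula in Theorem~\ref{ff} over $q\ge\frac n2$) and numerical verification up to $n=20$ via Table~\ref{Table 3}. Your first two paragraphs reproduce exactly the content of Remark~\ref{iii}; the finite-difference computation on $g(j)=2^{\,j}(n-2j+2)$ is correct and gives the right maximizing values of $q$ together with strictness for all other $q\ge\frac n2$. On this half of the problem your write-up is more detailed than the paper's own one-line remark, and nothing further is needed.

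Your sketch for the regime $q<\frac n2$ goes beyond anything the paper attempts. You have correctly located the obstruction: Lemma~\ref{z1zz}(2) combined with Lemma~\ref{hh} yields only $SP(w)\le\frac52 SP(w')+1$, which does not feed the recursion $M(n)=2M(n-2)+2$, and the double-removal argument in Case~2 of the proof of Theorem~\ref{ff} exploits the rigid multiplicity structure available only at $q=\frac n2$ (at most one letter of multiplicity $\ge3$, and that one exactly $3$). Making that bookkeeping uniform over all $q<\frac n2$ is precisely the open problem, and the paper has nothing to offer here beyond the numerics. One caution on your closing suggestion: the monotonicity $SP^{n,q}<SP^{n,q+1}$ for $q<\ceil{\frac n2}$ is itself unproved in the paper (though consistent with Table~\ref{Table 3}), and a naive letter-splitting of an optimal $q$-letter word need not yield an optimal $(q+1)$-letter word, so that route would require its own argument.
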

 The conjecture was verified for randomly generated words of length up to $20$ and no counter example was found in the results.

 \section{Conclusions}
We have studied scattered palindromic subwords in a word of length $n$. We have obtained lower and upper bounds for the number of scattered palindromic subwords in any word. We gave a tight upper bound for the number of scattered palindromic subwords ($SP^{n,q}$) in a word of length $n$ with the number of distinct letters $q$ at least half of its length. Finding a tight bound for $SP^{n,q}$ when $q<\frac{n}{2}$ is one of our future work. 
Analogous to \textit{rich words}, defined in \cite{palrich}, the concept of scattered palindromic rich words can be studied for $q\ge \frac{n}{2}$ (Theorem \ref{ff}). It would be interesting to characterize such words.

\bibliographystyle{abbrv}
\bibliography{citations.bib}

\end{document}